\begin{document}

\newtheorem{thm}{Theorem}[section]
\newtheorem{lem}[thm]{Lemma}
\newtheorem{prop}[thm]{Proposition}
\newtheorem{cor}[thm]{Corollary}
\newtheorem{defn}[thm]{Definition}
\newtheorem*{remark}{Remark}
\newtheorem{notn}[thm]{Notation}
\newtheorem{normal}[thm]{Normalization}

\numberwithin{equation}{section}

\newcommand{\Z}{{\mathbb Z}} 
\newcommand{\Q}{{\mathbb Q}}
\newcommand{\R}{{\mathbb R}}
\newcommand{\C}{{\mathbb C}}
\newcommand{\G}{{\mathbb G}}
\newcommand{\N}{{\mathbb N}}
\newcommand{\FF}{{\mathbb F}}
\newcommand{\T}{{\mathbb T}}
\newcommand{\A}{{\mathbb A}}
\newcommand{\fq}{\mathbb{F}_q}
\newcommand{\E}{{\mathbb E}}
\newcommand{\PP}{{\mathbb P}}

\def\scrA{{\mathcal A}}
\def\scrB{{\mathcal B}}
\def\scrC{{\mathcal C}}
\def\scrD{{\mathcal D}}
\def\scrE{{\mathcal E}}
\def\scrF{{\mathcal F}}
\def\scrH{{\mathcal H}}
\def\scrK{{\mathcal K}}
\def\scrL{{\mathcal L}}
\def\scrM{{\mathcal M}}
\def\scrN{{\mathcal N}}
\def\scrO{{\mathcal O}}
\def\scrPP{{\mathcal P}} 
\def\scrS{{\mathcal S}}
\def\scrT{{\mathcal T}}

\def\x{{\underline{x}}}
\def\X{{\underline{X}}}

\newcommand{\rmk}[1]{\footnote{{\bf Comment:} #1}}

\renewcommand{\mod}{\;\operatorname{mod}}
\newcommand{\ord}{\operatorname{ord}}
\newcommand{\TT}{\mathbb{T}}
\renewcommand{\i}{{\mathrm{i}}}
\renewcommand{\d}{{\mathrm{d}}}
\renewcommand{\^}{\widehat}
\newcommand{\HH}{\mathbb H}
\newcommand{\Vol}{\operatorname{vol}}
\newcommand{\area}{\operatorname{area}}
\newcommand{\tr}{\operatorname{tr}}
\newcommand{\norm}{\mathcal N} 
\newcommand{\intinf}{\int_{-\infty}^\infty}
\newcommand{\ave}[1]{\left\langle#1\right\rangle} 
\newcommand{\Var}{\operatorname{Var}}
\newcommand{\Cov}{\operatorname{Cov}}
\newcommand{\Prob}{\operatorname{Prob}}
\newcommand{\sym}{\operatorname{Sym}}
\newcommand{\disc}{\operatorname{disc}}
\newcommand{\CA}{{\mathcal C}_A}
\newcommand{\cond}{\operatorname{cond}} 
\newcommand{\lcm}{\operatorname{lcm}}
\newcommand{\Kl}{\operatorname{Kl}} 
\newcommand{\leg}[2]{\left( \frac{#1}{#2} \right)}  
\newcommand{\SL}{\operatorname{SL}}
\newcommand{\Id}{\operatorname{Id}}
\newcommand{\diag}{\operatorname{diag}}
\newcommand{\diam}{\operatorname{diam}}
\newcommand{\supp}{\operatorname{supp}}
\newcommand{\sgn}{\operatorname{sgn}}

\newcommand{\sumstar}{\sideset \and^{*} \to \sum}

\newcommand{\LL}{\mathcal L} 
\newcommand{\sumf}{\sum^\flat}
\newcommand{\Hgev}{\mathcal H_{2g+2,q}}
\newcommand{\USp}{\operatorname{USp}}
\newcommand{\conv}{*}
\newcommand{\dist} {\operatorname{dist}}
\newcommand{\CF}{c_0} 
\newcommand{\kerp}{\mathcal K}

\newcommand{\gp}{\operatorname{gp}}
\newcommand{\Area}{\operatorname{Area}}

\title[Delocalization for random displacement models]{Delocalization for random displacement \\ models with Dirac masses}
\author{Henrik Uebersch\"ar}
\address{Laboratoire Paul Painlev\'e, CNRS U.M.R. 8524, Universit\'e Lille 1, 59655 Villeneuve d'Ascq Cedex, France.}
\email{henrik.ueberschar@math.univ-lille1.fr}
\date{\today}
\maketitle

\begin{abstract}
We study a random Schr\"odinger operator, the Laplacian with random Dirac delta potentials on a torus $\T_L^d=\R^d/L\Z^d$, in the thermodynamic limit $L\to\infty$, for dimension $d=2$. The potentials are located on a randomly distorted lattice $\Z^2+\omega$, where the displacements are i.i.d. random variables sampled from a compactly supported probability density. We prove that, if the disorder is sufficiently weak, there exists a certain energy threshold $E_0>0$ above which exponential localization of the eigenfunctions must break down. In fact we can rule out any decay faster than a certain polynomial one. Our results are obtained by translating the problem of the distribution of eigenfunctions of the random Schr\"odinger operator into a study of the spatial distribution of two point correlation densities of certain random superpositions of Green's functions and its relation with a lattice point problem.
\end{abstract}

\section{Introduction} 

Anderson observed in his landmark 1958 paper \cite{A} that in the presence of sufficiently strong disorder the wave functions of a disordered quantum system may be exponentially localized. This phenomenon is today known as ``Anderson localization''. A key question in the theory of disordered systems concerns the breakdown of localization and the existence of a transition from a localized to a delocalized regime at weak disorder. The mathematical theory of Schr\"odinger operators with random potentials on $\R^d$ is concerned with a rigorous mathematical understanding of Anderson localization. Whereas the exponential localization of the eigenfunctions of random Schr\"odinger operators is by now firmly established at the rigorous level, the problem of delocalization remains widely open.

The scaling theory of Abrahams, Anderson, Licciardello and Ramakrishnan \cite{AALR} predicts that the existence of a delocalization transition ought to depend on the dimension of the system. In dimension $d=1$ one always expects exponential localization of the eigenfunctions no matter how weak the disoder in the system. In dimension $d=3$ a phase transition from localization at strong disorder to delocalization at weak disorder is expected to occur.  

The critical case of dimension $d=2$ is of particular interest. Generally it is believed that, as for dimension $d=1$, any strength of disorder is sufficient to have exponential localization and no transition occurs. Although, the localization length is expected to be exponentially large in terms of the electronic mean free path length, which makes it very difficult in practice to distinguish the two regimes.

As we will show in this paper, the predictions of the scaling theory surprisingly do not hold for certain random Schr\"odinger operators for which we prove delocalization, i.e. the breakdown of exponential localization for sufficiently weak disorder above a certain energy threshold in dimension $2$. In fact we can rule out any decay but a certain polynomial one.

\subsection{The model}

The present paper studies a random displacement model on a large torus (more precisely, the limit as the size of the torus tends to infinity) in dimension $2$. The impurities are modeled by Dirac delta potentials, which is a natural simplification from smooth compactly supported potentials which offers the considerable advantage that the eigenfunctions may be computed explicitly as certain superpositions of Green's functions. The complexity of the eigenfunction is encoded in the superposition vector (cf. section \ref{back} for a detailed discussion). It is known that smooth potentials may be modeled by delta potentials in a suitable energy range, where the wavelength is much larger than the size of the support of the smooth potential (cf. for instance \cite{SebaExner,Fishman}).

We consider a weak disorder regime. This means given a torus $\T_L^2$, where $L\in\N$ is large, we introduce the following random displacement model on $\T^2_L$
\begin{equation}\label{random Schrodinger}
H_{\omega_L}=-\Delta+\alpha\sum_{\xi\in\Z^2\cap \T^2_L}\delta(x-\xi-\omega_\xi)
\end{equation}
where the displacements $\omega_\xi$ are i.i.d. random variables with a compactly supported radial probability density $P_{0}(x)=\epsilon_0^{-2}P(\frac{|x|}{\epsilon_0})$, where $\epsilon_0\in(0,\tfrac{1}{4})$ is the disorder parameter, $P\in C^0_c(\R_+)$ and $0\in\supp P\subset [0,1]$. We impose periodic boundary conditions, but our arguments also work for Neumann or Dirichlet boundary conditions.

We denote by $$\omega_L=\{\omega_\xi \mid \xi\in \Z^2\cap \T^2_L\}$$ the stochastic process that samples independently from the probability density $P_0$ the random displacements $\omega_\xi$ for each $\xi\in \Z^2\cap \T^2_L$. 

The above operator may be formulated rigorously by applying the theory of self-adjoint extensions (see subsection \ref{extensions}) to the restricted Laplacian $-\Delta|_{C^\infty_c(\T^2_L-\omega_L)}$. 
We denote the family of self-adjoint extensions associated with the formal operator \eqref{random Schrodinger} by $\{-\Delta_{\omega_L,U}\}_{U\in U(N)}$, where $N=\#\omega_L$. The number of self-adjoint extensions exceeds the number of physical coupling constants. We remark that in particular the subgroup of diagonal unitary matrices $D(N)\subset U(N)$ corresponds to the case where a non-local interaction between the impurities is forbidden.

Since the operator $-\Delta_{\omega_L,U}$ is a rank $N$ perturbation of the Laplacian, it has at most $N$ ``new" (random) eigenfunctions corresponding to each new eigenvalue which is ``torn off'' each old eigenspace of the Laplacian, and we remark that for $L\gg 1$ the rank is always larger than the multiplicity of the Laplace eigenvalues. This means that there will be no ``old'' Laplace eigenfunctions in the spectrum of  $-\Delta_{\omega_L,U}$.

The eigenvalues of the Laplacian on the torus $\T^2_L$ are given by the set $S=\{n \mid n=4\pi^2(\xi_1^2+\xi_2^2)/L^2, \;\xi_1,\xi_2\in\N\}=\{0=\frac{n_0}{L^2}<\frac{n_1}{L^2}<\frac{n_2}{L^2}<\cdots\}$, where the set $\scrS=\{n_k\}_{k=0}^\infty$ are integers which (up to a factor $4\pi^2$) are representable as sums of two squares of integers. The associated eigenfunctions are of the form $e_{\xi/L}(x)=e^{2\pi\i \left\langle\xi/L,x\right\rangle}$.

The multiplicity of a Laplace eigenvalue $n$ is given by the number of ways the integer $nL^2/4\pi^2$ can be written as a sum of $2$ squares of integers. The multiplicity of $n$ grows on average like $\sqrt{\log (nL)}$, which is a consequence of Landau's Theorem \cite{L}: $$\#\{n\in \scrS \mid n\leq x\}\sim \frac{Bx}{\sqrt{\log x}}$$ for some $B>0$. 

\subsection{Delocalization}

We are interested in the spatial distribution of the eigenfunctions of $H_{\omega_L}$ on the torus $\T^2_L$ in the limit as $L\to\infty$.

Let $\omega=\{\omega_\xi \mid \xi\in\Z^2\}$ denote the stochastic process which independently samples from $P_0$ the random displacements for each lattice vector $\xi\in\Z^2$.

The scaling theory of Abrahams, Anderson, Licciardello and Ramakrishnan \cite{AALR} suggests that the eigenfunctions of the formal random Schr\"odinger operator 
\begin{equation}\label{random R2}
H_\omega=-\Delta+\alpha\sum_{\omega_\xi\in\omega}\delta(x-\xi-\omega_\xi), \quad \alpha\in\R
\end{equation}
ought to be exponentially localized for all energies (since $d=2$). 

The $L^2$-eigenfunctions are random superpositions of Green's functions and are of the form 
$$\Psi_E(x)=\sum_{\omega_\xi\in\omega}c_{\xi,\omega}G_E(x,\xi+\omega_\xi), \quad (c_{\xi,\omega})_{\xi\in\Z^2}\in l^2.$$
The physical interpretation of this exponential localization (``Anderson localization'') is that transport breaks down due to the presence of sufficient disorder in the system. It has been shown for various models, that in the localized regime the random operator has almost surely pure point spectrum \cite{Molchanov}.

In the case $d=3$, however, the scaling theory predicts the existence of a so-called ``mobility edge'', which means that for sufficiently weak disorder, above a certain energy threshold, a continuous band structure should emerge in the spectrum of the random operator. Values well inside each interval will correspond to generalized eigenfunctions, whereas values near the band edges may still correspond to exponentially localized eigenfunctions. For very low disorder, and sufficiently high energy, the spectrum should be purely continuous (possibly with a singular component) and all values should correspond to generalized eigenfunctions. 

In fact almost sure existence of pure point spectrum and exponential localization of the eigenfunctions at the bottom of the spectrum has been proven by Boutet de Monvel and Grinhpun \cite{BG} for the case of random couplings and scatterers located on a lattice. This was later extended to the case of random sublattices by Hislop, Kirsch and Krishna \cite{HKK}. 

\subsection{Results}
We consider the operator \eqref{random Schrodinger} on the torus $\T^2_L$ \footnote{
We have chosen periodic boundary conditions here. However, our proofs can easily be adapted to Neumann or Dirichlet conditions. This simply leads to a different character in the spectral expansion of the Green's functions.
} and the random displacement process $\omega_L=\{\omega_\xi \mid \xi\in \T^2_L\cap \Z^2\}$. There is no particular reason for the choice of a standard torus. Our results still hold for rectangular tori. 

Rigorously, the formal operator \eqref{random Schrodinger} is realized by self-adjoint extension theory, as explained above, leading to the family of operators $-\Delta_{\omega_L,U}$, where $U\in U(N)$ and $N=\#\omega_L$. The choice $U=e^{\i\varphi}\Id_N$, $\varphi\in(-\pi,\pi)$, corresponds to the formal operator $H_{\omega_L}$ with $\alpha\neq 0$. We denote the associated self-adjoint extension by $-\Delta_{\omega_L,\varphi}$.

The spectrum of the operator $-\Delta_{\omega_L,\varphi}$ on $\T^2_L$ is discrete and the density of eigenvalues increases with $L$, according to Weyl's law, proportional with the volume of $\T^2_L$. 
The eigenfunctions of the operator $-\Delta_{\omega_L,\varphi}$ are given by random superpositions of Green's functions 
$$G_E^L(x)=\sum_{\omega_\xi\in\omega_L}c_{\xi,\omega_L}G_E^L(x,\xi+\omega_\xi).$$
We fix a normalization of the coefficients $c_{\xi,\omega_L}$. In fact, $\omega_L$-a.s. the space of superposition vectors $(c_{\xi,\omega_L})_{\xi\in\Z^2\cap \T^2_L}$ is of dimension $1$. 
So upon choosing our normalization, the superposition vector is a.s. unique.
\begin{normal}
For the eigenfunctions $G_E^L$ we fix the normalization 
$$\sum_{\xi\in\Z^2\cap \T^2_L}|c_{\xi,\omega_L}|^2=1.$$
\end{normal}

Now if $L$ is large compared with the localization length we should be able to observe exponential localization for the eigenfunctions $G_E^L$. 

We now proceed to define localization for the random Schr\"odinger operator $-\Delta_{\omega_L,\varphi}$. Let us define the smoothed $L^2$-densities
$$\Phi_E^L(x)=\int_{\T^2_L}\chi(x'-x)|G_E^L(x')|^2 dx'$$
and $$\varphi_E^L(x)=\int_{\T^2_L}\chi(x'-x)|g_E^L(x')|^2 dx', \quad g_E^L=G_E^L/\|G_E^L\|_{L^2(\T^2_L)}.$$
where $\chi(x)=\tilde{\chi}(|x|)$ s. t. $\|\chi\|_1=1$, with $\tilde{\chi}\in C^\infty_c(\R_+)$ such that $\supp\tilde{\chi}=[0,1]$ with $\tilde{\chi}|_{[0,\tfrac{1}{2}]}=1$, and decreasing on $[\tfrac{1}{2},1]$.

We define localization in terms of the two point correlation density of an eigenfunction
\begin{defn}\label{locdefn}
Fix $L\gg 1$. Denote by $\omega_L$ the random displacement model on $\T^2_L$, as defined above. Let $\varphi\in(-\pi,\pi)$ and consider the operator $-\Delta_{\omega_L,\varphi}$. Let $\chi_R=R^{-d}\chi(\cdot/R)$.

We say that the operator $-\Delta_{\omega_L,\varphi}$ satisfies $f$-localization on an interval $I=[a,b]$ if $\forall R$ with $\frac{L}{10}\geq R\geq\frac{100}{\sqrt{E}}$ and $\forall x,y\in \T^2_L$ s. t.\footnote{This ensures that the balls we average over are sufficiently far apart.} $|x-y|\geq 4R$ we have a.s.
\begin{equation}\label{decay_cond}
\forall E\in[a,b]: \varphi_E^L(x)\varphi_E^L(y)
\leq \scrC_{\omega_L} f(|x-y|)
\end{equation}
where $\scrC_{\omega_L}$ is integrable on the sample space.

We say that $-\Delta_{\omega_L,U}$ satisifes exponential localization on $I$ if it satisfies $f$-localization with $f(x)=Ae^{-B|x|}$,
where $A,B$ are constants which may depend on the choice of $I$.
\end{defn}
\begin{remark}
We point out that, since the eigenfunction $G_E^L$ is a superposition of Green's functions, it is necessary to define the localization bound by integrating against a smooth test function in order to deal with the singularities.
It is, furthermore, important to ensure that the diameter of the support is large compared to the wavelength $1/\sqrt{E}$ in order to ensure that we are not integrating over a region which is entirely contained in the immediate vicinity of a singularity.
\end{remark}

\begin{cor}
Let $E=\inf\{E'\in[a,b]\}$. $f$-localization implies that $\exists C>0$ s. t. $\forall x,y\in \T^2_L, |x-y|\geq4R$ we have $$\E(\varphi_E^L(x)\Phi_E^L(y))\leq C f(|x-y|).$$
\end{cor}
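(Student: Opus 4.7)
The plan is to reduce the statement to the defining $f$-localization bound via the simple identity that relates $\Phi_E^L$ and $\varphi_E^L$ through the $L^2$-normalization. Since $g_E^L = G_E^L/\|G_E^L\|_{L^2(\T^2_L)}$, substituting in the definitions gives
\[
\Phi_E^L(y) = \|G_E^L\|_{L^2(\T^2_L)}^2\, \varphi_E^L(y),
\]
so that at $E=a=\inf[a,b]$ we have, pointwise,
\[
\varphi_E^L(x)\Phi_E^L(y) \;=\; \|G_E^L\|_{L^2(\T^2_L)}^2\,\varphi_E^L(x)\varphi_E^L(y).
\]
Applying the hypothesis of $f$-localization to the last two factors, valid almost surely for all $x,y$ with $|x-y|\ge 4R$ and all $E\in[a,b]$, yields the pointwise a.s.\ bound
\[
\varphi_E^L(x)\Phi_E^L(y) \;\le\; \|G_E^L\|_{L^2(\T^2_L)}^2\,\scrC_{\omega_L}\, f(|x-y|).
\]

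Next I would take expectations on both sides and factor out $f(|x-y|)$, reducing the problem to showing the finite bound
\[
C \;:=\; \E\bigl[\|G_E^L\|_{L^2(\T^2_L)}^2\,\scrC_{\omega_L}\bigr] \;<\; \infty.
\]
To control $\|G_E^L\|_{L^2(\T^2_L)}^2$ I would expand
\[
\|G_E^L\|_{L^2(\T^2_L)}^2 \;=\; \sum_{\xi,\eta \in \Z^2\cap\T^2_L} c_{\xi,\omega_L}\overline{c_{\eta,\omega_L}}\,\bigl\langle G_E^L(\cdot,\xi+\omega_\xi),\, G_E^L(\cdot,\eta+\omega_\eta)\bigr\rangle
\]
and bound the inner products uniformly by a constant $M_E$ (this is legitimate because the torus Green's function has only a logarithmic singularity, so its $L^2$-norm is finite and independent of the base point by translation invariance). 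Then Cauchy--Schwarz in the coefficient sum together with the normalization $\sum |c_{\xi,\omega_L}|^2 = 1$ gives the crude deterministic bound $\|G_E^L\|_{L^2(\T^2_L)}^2 \le M_E\,N \le M_E L^2$, where $N = \#\omega_L$.

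Combining this with the hypothesized integrability of $\scrC_{\omega_L}$ on the sample space yields $C \le M_E L^2\, \E[\scrC_{\omega_L}] < \infty$, which is exactly the required bound (note that $C$ is permitted to depend on $L$, $E$, and $R$, since the corollary asserts only existence of such a constant). The main obstacle is the uniform control of the Green's function cross terms entering $\|G_E^L\|_{L^2}^2$; this is not actually hard on $\T^2_L$ thanks to the $L^2$ integrability of the logarithmic singularity and translation invariance of the domain, but it is the only step that uses structural information beyond the normalization condition and the $f$-localization hypothesis itself.
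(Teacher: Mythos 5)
Your proof is correct, and its first half coincides with the paper's: both start from the identity $\Phi_E^L(y)=\|G_E^L\|_2^2\,\varphi_E^L(y)$ and the resulting a.s.\ bound $\varphi_E^L(x)\Phi_E^L(y)\le \scrC_{\omega_L}\|G_E^L\|_2^2 f(|x-y|)$, so everything reduces to showing $\E(\scrC_{\omega_L}\|G_E^L\|_2^2)<\infty$. You diverge at that point. The paper applies Cauchy--Schwarz in expectation, $\E(\scrC_{\omega_L}\|G_E^L\|_2^2)\le\E(\scrC_{\omega_L}^2)^{1/2}\E(\|G_E^L\|_2^4)^{1/2}$, and argues that both factors remain bounded as $L\to\infty$ (by convergence to the corresponding quantities for the infinite-volume process); this yields a constant uniform in $L$, but tacitly upgrades the hypothesis from integrability of $\scrC_{\omega_L}$ (which is all that Definition \ref{locdefn} assumes) to square-integrability. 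You instead bound $\|G_E^L\|_2^2$ deterministically by $M_E N\le M_E L^2$ via Cauchy--Schwarz on the coefficient sum, the normalization $\sum_\xi|c_{\xi,\omega_L}|^2=1$, and translation invariance of the Green's function, after which only $\E(\scrC_{\omega_L})<\infty$ is needed --- exactly the stated hypothesis. The trade-off: your constant grows with $L$, both through the factor $N=L^2$ and through $M_E$, which itself degrades as $E$ approaches the Laplace spectrum (here the gap is only of order $\delta_0/L^2$); this is harmless for the corollary as literally stated (fixed $L$, bare existence of some $C>0$), but would be too weak wherever an $L$-uniform constant is needed in a thermodynamic-limit argument. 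In short, the paper's route buys uniformity in $L$ at the price of a stronger moment assumption on $\scrC_{\omega_L}$, while yours is more elementary and faithful to the stated hypotheses but $L$-dependent.
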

\begin{proof}
From the definition of localization we have $$\varphi_E^L(x)\Phi_E^L(y)\leq \scrC_{\omega_L}\|G_E^L\|_2^2 f(|x-y|)$$
and the corollary follows by observing that $\E(C_{\omega_L}\|G_E^L\|_2^2)\leq \E(C_{\omega_L}^2)^{1/2}\E(\|G_E^L\|_2^4)^{1/2}$. As $L\to\infty$ the quantity $\E(\|G_E^L\|^4_2)$ remains bounded as it converges to $\E(\|G_E\|_2^4)$ (recall the normalization $\sum_{\xi}|c_\xi|^2=1$). Similarly $\E(C_{\omega_L}^2)$ converges to $\E(C_\omega^2)$ as $L\to\infty$.
\end{proof}
 
The following theorem is our main result.
\begin{thm}\label{main}
Fix $f:\R_+\mapsto\R_+$ a continuous, stricly decreasing function such that $f(x)=O(x^{-\alpha})$ for sufficiently large $\alpha>0$. Given $L\gg 1$ let $\omega_L$ denote the random displacement model on $\T^2_L$, as defined above. Let $\varphi\in(-\pi,\pi)$ and denote the associated self-adjoint extension by $-\Delta_{\omega_L,\varphi}$. 

Then there exists $E_0>0$ such that for any $b>a>E_0$ there exist $L_0\gg 1$, $\epsilon_0\ll 1$ s.t. for any $L\geq L_0$ the operator $-\Delta_{\omega_L,\varphi}$ cannot satisfy $f$-localization on $[a,b]$.
\end{thm}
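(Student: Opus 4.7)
The strategy is to contradict $f$-localization via the preceding corollary. If $f$-localization holds, the corollary gives $\E[\varphi_E^L(x)\Phi_E^L(y)]\le C\,f(|x-y|)$ with $C$ uniformly bounded in $L$ (as observed in its proof, since $\E\scrC_{\omega_L}^2$ and $\E\|G_E^L\|_2^4$ converge as $L\to\infty$). I will instead prove the lower bound
\[
\E\big[\varphi_E^L(x)\,\varphi_E^L(y)\big]\;\gtrsim\;\frac{1}{L^4}\qquad\text{for $|x-y|\asymp L$,}
\]
valid for at least one eigenvalue $E\in[a,b]$ of $-\Delta_{\omega_L,\varphi}$. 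Since $f(|x-y|)=O(L^{-\alpha})$, this contradicts the upper bound once $\alpha$ exceeds a certain explicit threshold (essentially $\alpha>4$, modulo the error terms coming from the shell reduction described below).

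The first step is the spectral representation of the eigenfunction. Using the kernel
\[
G_E^L(x,y)=\frac{1}{L^2}\sum_{k\in\Z^2}\frac{e^{2\pi\i\langle k/L,\,x-y\rangle}}{4\pi^2|k|^2/L^2-E},
\]
I expand $G_E^L$ as a Fourier series on $\T^2_L$ with
\[
\widehat{G_E^L}(k)\;=\;\frac{1}{4\pi^2|k|^2/L^2-E}\cdot\frac{1}{L^2}\sum_\xi c_\xi\,e^{-2\pi\i\langle k/L,\,\xi+\omega_\xi\rangle}.
\]
For weak disorder $\epsilon_0\ll 1$, the Krein/matching equation defining the self-adjoint extension forces each eigenvalue $E\in[a,b]$ to be within $O(\epsilon_0)$ of a Laplace eigenvalue $m/L^2$, $m\in\scrS$, so $\widehat{G_E^L}$ concentrates on the energy shell $\scrT_m=\{k\in\Z^2:|k|^2=m\}$ (of cardinality $r_2(m)$, governed by Landau's theorem). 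Smoothing by $\chi_R$ with $L/10\ge R\ge 100/\sqrt{E}$ damps all Fourier frequencies $|k-k'|\gg L/R$, so to leading order
\[
\varphi_E^L(x)\;\approx\;\frac{1}{\|G_E^L\|^2}\,\Big|\sum_{k\in\scrT_m} A_k\,e^{2\pi\i\langle k/L,\,x\rangle}\Big|^2,
\]
with random shell coefficients $A_k=A_k(\omega_L)$ determined by the Krein equation. A fourth-moment expansion of $\varphi_E^L(x)\varphi_E^L(y)$ together with the $\omega_L$-average leaves only the ``Wick pairings'' $\{k_1=k_2,k_3=k_4\}$ and $\{k_1=k_4,k_3=k_2\}$ (cross-terms decay by powers of $\widehat{P_0}((k-k')/L)$); both surviving pairings are $|x-y|$-independent and each contributes $\sim\big(\sum_k\E|A_k|^2\big)^2/L^4\gtrsim L^{-4}$, producing the desired lower bound.

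The hardest step will be establishing the uniform shell estimate $\E|A_k(\omega_L)|^2\gtrsim 1/r_2(m)$, i.e.\ that the random Krein matrix does not systematically suppress any subset of shell modes. This requires a perturbative analysis of the $N\times N$ Krein matrix (of Green-function values at the pairs of scatterer positions) in the weak-disorder regime, expanded around its deterministic $\omega_L\equiv 0$ limit, together with an anti-concentration estimate for the i.i.d.\ displacements $\omega_\xi$. At this point the ``lattice point problem'' mentioned in the abstract enters through a quantitative equidistribution of the directions $\{k/|k|:k\in\scrT_m\}$ around the unit circle, which rules out conspiracies between the random displacements and the shell geometry. The threshold $E_0$ is dictated by the requirement that $r_2(m)$ be large enough on a positive-density subset of $m\in\scrS$ for this anti-concentration to succeed — which Landau's theorem guarantees beyond a certain energy.
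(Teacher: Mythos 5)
Your overall architecture (contradict the assumed decay $f(|x-y|)\lesssim L^{-\alpha}$ at $|x-y|\asymp L$ by a polynomial lower bound on a two-point quantity) matches the paper's, but the route you take to the lower bound is different and contains a genuine gap. You attempt to prove the equidistribution statement \emph{unconditionally}: you reduce $\varphi_E^L$ to a random trigonometric sum over a single energy shell $\scrT_m$ and then assert the uniform shell estimate $\E|A_k|^2\gtrsim 1/r_2(m)$ via an unspecified anti-concentration analysis of the Krein matrix. This estimate is not established, and it is essentially the entire difficulty of the problem: the coefficients $A_k$ are all functionals of the one-dimensional kernel of the same random matrix $M_\lambda$, so they are strongly dependent, your ``Wick pairing'' fourth-moment computation is not justified (the off-diagonal pairings do not decouple just because the displacements are i.i.d.), and nothing in your sketch rules out the kernel vector concentrating its Fourier mass on a sparse subset of the shell. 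A correct unconditional version of this claim would amount to proving delocalization directly, which is a much stronger statement than the theorem.

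The paper avoids this entirely by making the argument \emph{self-defeating}: the hypothesis of $f$-localization is itself the input that controls the coefficients. Concretely, localization of the two-point density implies decay of the superposition coefficients, $\E(|d_{\xi,\omega_L}|^2)\lesssim b^2 f(|\xi-x_0|)$ (Lemma \ref{discrete}), which forces the discrete transform $D_{\omega_L}$ to be governed by a bounded number $N_0$ of sites near $x_0$; combined with the thin-annulus approximation, the lattice-point bounds of \cite{RU,U2} on the subsequence $\scrS'$, and the weak-disorder condition $\epsilon_0\ll 1/\sqrt{b}$ (which is where equidistribution of lattice points on the circle $|\xi|^2=n_{k-1}$ enters to lower-bound $\|G_\lambda\|_2^2$), this yields Proposition \ref{equithm}: $\E\langle a\psi_\lambda,\psi_\lambda\rangle\gtrsim\hat a(0)\,\E(\varphi_E^L(x_0))$. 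A separate contradiction argument (Proposition \ref{lowpo}) gives $\E(\varphi_E^L(x_0))\gtrsim L^{-A}$, and the two together give $\E(\varphi_E^L(x_0)\varphi_E^L(y))\gtrsim L^{-2-A}$, contradicting $f(L)\lesssim L^{-\alpha}$ for $\alpha>2+A$. Note also two smaller inaccuracies in your sketch: the eigenvalue $E$ is chosen at distance $\geq\delta_0/L^2$ from the Laplace spectrum and the relevant Fourier mass lives on the two neighbouring shells $n_k,n_{k+1}$ and a thin annulus around them, not on a single shell determined by an $O(\epsilon_0)$ perturbation; and the smoothing scale $R\geq 100/\sqrt{E}$ only removes the Green's function singularities, it does not project $|G_E^L|^2$ onto its diagonal frequencies, since differences $|k-k'|$ up to order $\sqrt{E}L$ survive the convolution with $\chi_R$.
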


An analogous result can easily be proved for the case $d=3$ following the exact same argument that is presented in this paper. Instead of the bounds on lattice point sums in $d=2$ from \cite{RU,U2} one uses the analogous bounds for $d=3$ which were proven in \cite{Y}.

\subsection*{Acknowledgements} 
This work was largely carried out as a Postdoc at the Institute of Theoretical Physics at CEA Saclay and completed while a Postdoc, supported by the Labex CEMPI (ANR-11-LABX-0007-01), at the Laboratoire Paul Painlev\'e, Universit\'e Lille 1. 

I would, in particular, like to thank St\'ephane Nonnenmacher for numerous discussions about this work and very useful suggestions which have contributed to the improvement of this paper. Furthermore, I would like to thank Fr\'ed\'eric Klopp for very useful discussions about Anderson localization and the problem of delocalization.

\section{Proof of Theorem \ref{main}}

Throughout this section and the rest of the paper expectation values are taken with respect to the random variable $\omega_L$. We will therefore omit the subscript.

The proof of Theorem \ref{main} works by contradiction. We assume $f$-localization of the operator $-\Delta_{\omega_L,\varphi}$ on an interval $[a,b]$ for sufficiently large $a\gg 1$, which ensures $\frac{100}{\sqrt{a}}<\tfrac{1}{4}$, and show that for a sufficiently small disorder parameter $\epsilon_0$ and sufficiently large decay rate $\alpha$ and torus size $L$ this assumption leads to a contradiction. 

Denote by $\scrS'$ the subsequence of density one of the set of integers representable as a sum of two squares as constructed in appendix \ref{construct} (cf. the subsequence $\scrS'$ in Thm 1.1, p. 3 in \cite{U2}). 

Let $L\gg 1$ and choose any $n_k\in\scrS'$ s.t. $[\frac{n_k}{L^2},\frac{n_k}{L^2}]\subset[a,b]$. Fix small $\delta_0>0$. We choose $$E=\inf\{E'\in\sigma(-\Delta_{\omega_L,\varphi})\cap[\frac{n_k+\delta_0}{L^2},\frac{n_{k+1}-\delta_0}{L^2}]\}.$$

We will make use of the following two propositions. 
The first proposition is proven in section \ref{lower poly}.
\begin{prop}\label{lowpo}
There exists $x_0\in \T^2_L$, $A>0$ such that $\E(\varphi_E^L(x_0))\gtrsim L^{-A}$.
\end{prop}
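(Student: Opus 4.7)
The plan is to exploit the $L^2$-normalization of $g_E^L$ together with the fact that $\chi$ is a probability density, and conclude by a pigeonhole argument. No fine analysis of the Green's function superposition is needed, which is why such a lower bound is essentially for free.

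First I would observe that, since $\supp\chi\subset\{|y|\leq 1\}$ and $L\gg 1$, the function $x\mapsto\chi(x'-x)$ may be regarded as a probability density on $\T^2_L$ for each fixed $x'\in\T^2_L$ (no wrap-around). In particular
$$\int_{\T^2_L}\chi(x'-x)\,dx=\int_{\R^2}\chi(y)\,dy=1.$$
Next, applying Fubini to the double integral defining the total mass of $\varphi_E^L$ yields
$$\int_{\T^2_L}\varphi_E^L(x)\,dx=\int_{\T^2_L}|g_E^L(x')|^2\Bigl(\int_{\T^2_L}\chi(x'-x)\,dx\Bigr)\,dx'=\|g_E^L\|_{L^2(\T^2_L)}^2=1,$$
an identity which holds $\omega_L$-a.s.\ by the chosen normalization $g_E^L=G_E^L/\|G_E^L\|_{L^2}$. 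Passing to expectation (again by Tonelli, the integrand being nonnegative) gives
$$\int_{\T^2_L}\E\bigl(\varphi_E^L(x)\bigr)\,dx=1.$$

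The final step is a mean value argument: the area of $\T^2_L$ is $L^2$, so the average of the nonnegative function $x\mapsto\E(\varphi_E^L(x))$ on $\T^2_L$ equals $L^{-2}$; hence there must exist $x_0\in\T^2_L$ with $\E(\varphi_E^L(x_0))\geq L^{-2}$. This proves Proposition \ref{lowpo} with $A=2$.

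There is no real obstacle to this argument; the content is entirely in recognizing that the $L^2$-normalization forces a polynomial lower bound of order $L^{-2}$ somewhere on the torus, regardless of how the Green's function superposition is structured. The substance of Theorem \ref{main} must therefore come from the complementary direction: deriving from the hypothetical $f$-localization an \emph{upper} bound on $\E(\varphi_E^L(x)\varphi_E^L(y))$ that, together with the trivial lower bound supplied here (and presumably sharpened via the lattice point estimates of \cite{RU,U2}), produces a contradiction.
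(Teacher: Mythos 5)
Your proof is correct, and it takes a genuinely different and much more elementary route than the paper's. You observe that since $\|\chi_R\|_{L^1}=1$ and $\|g_E^L\|_{L^2(\T^2_L)}=1$ almost surely, Tonelli gives $\int_{\T^2_L}\E(\varphi_E^L(x))\,dx=1$, so pigeonhole on the torus of area $L^2$ produces a point $x_0$ with $\E(\varphi_E^L(x_0))\geq L^{-2}$; this is insensitive to the smoothing radius, since the identity holds for every admissible $R$, in particular for $R=\epsilon L$ as used in the proof of Theorem \ref{main}. The paper instead argues by contradiction: it assumes the smoothed density decays like $(1+|x_1-y|)^{-\beta}$, $\beta>2d$, around an arbitrary $x_1$, uniformly over radii $R\in[\tfrac14,\epsilon L]$, reruns the equidistribution machinery of Proposition \ref{equithm} with $\varphi_E^L(x_0)$ replaced by $1$ to get $\langle a\,g_\lambda,g_\lambda\rangle\gtrsim\hat a(0)$, derives a contradiction at a point $y$ with $|x_1-y|\asymp L$, and then converts the resulting lower bound at some radius $R'$ to the radius $\epsilon L$ at the cost of a factor $(R'/\epsilon L)^2\gtrsim L^{-2}$, ending with $A=\beta+2>2d+2$. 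Your argument therefore yields a sharper exponent ($A=2$ rather than $A>6$), hence a weaker requirement $\alpha>2+A$ on the decay of $f$ in Theorem \ref{main}, and it needs none of the lattice-point or equidistribution input; the paper's longer detour buys nothing for the statement as written beyond structural parallelism with the proof of Proposition \ref{equithm}. The only step worth spelling out is why the mean-value argument produces an actual point $x_0$: either note that $x\mapsto\E(\varphi_E^L(x))$ is continuous on the compact torus, or note that a nonnegative measurable function lying everywhere strictly below $L^{-2}$ on a set of measure $L^2$ could not have integral exactly $1$; both are routine.
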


The second one is proven in section \ref{deloc}.
\begin{prop}\label{equithm}
Let $a\in C^\infty(\T^2)$, $\hat{a}(0)\neq0$. We have for any $x_0\in \T^2_L$ and $\psi_\lambda=\varphi_\E^L(x_0)^{1/2}g_\lambda$
$$\E(\left\langle a\psi_\lambda,\psi_\lambda\right\rangle)\gtrsim \hat{a}(0)\E(\varphi_E^L(x_0)).$$
\end{prop}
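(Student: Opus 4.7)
The plan is to separate the constant Fourier mode of $a$ on the unit torus, which produces the main term. Since $\psi_\lambda = \varphi_E^L(x_0)^{1/2}g_\lambda$ and $\varphi_E^L(x_0)$ is a nonnegative scalar depending only on $\omega_L$, we have $\langle a\psi_\lambda,\psi_\lambda\rangle = \varphi_E^L(x_0)\int_{\T^2_L}a(x)|g_\lambda(x)|^2\,dx$. Expanding $a(x) = \hat a(0) + \sum_{k\neq 0}\hat a(k)e^{2\pi\i\langle k,x\rangle}$ and using $\|g_\lambda\|_{L^2(\T^2_L)}=1$ yields
$$\E\bigl(\langle a\psi_\lambda,\psi_\lambda\rangle\bigr) = \hat a(0)\,\E(\varphi_E^L(x_0)) + \sum_{k\neq 0}\hat a(k)\,\E\bigl(\varphi_E^L(x_0)\,M_k\bigr),$$
where $M_k := \int_{\T^2_L}e^{2\pi\i\langle k,x\rangle}|g_\lambda(x)|^2\,dx$. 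The first term already matches the right-hand side of the proposition, so the task reduces to dominating the error sum by a fixed fraction of $\hat a(0)\,\E(\varphi_E^L(x_0))$.

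To bound the error I would use the Fourier expansion $g_\lambda(x) = \sum_\xi\alpha_\xi e^{2\pi\i\langle\xi/L,x\rangle}$ coming from the Green's function representation, with $\alpha_\xi = \|G_E^L\|^{-1}L^{-2}D_\xi^{-1}\beta_\xi$, $\beta_\xi = \sum_\eta c_{\eta,\omega_L}\,e^{-2\pi\i\langle\xi/L,\eta+\omega_\eta\rangle}$ and $D_\xi = E-4\pi^2|\xi|^2/L^2$. Orthogonality of characters on $\T^2_L$ then gives the bilinear identity $M_k = L^2\sum_\xi\alpha_\xi\overline{\alpha_{\xi+Lk}}$. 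The spectral gap $|D_\xi|\gtrsim\delta_0/L^2$, coming from placing $E$ strictly between two consecutive Laplace eigenvalues, forces the mass of $(\alpha_\xi)$ to be sharply concentrated on the two lattice circles $4\pi^2|\xi|^2\in\{n_k,n_{k+1}\}$. For $k\neq 0$ the shift $\xi\mapsto\xi+Lk$ of magnitude at least $L$ typically moves one of $\xi,\xi+Lk$ off those circles, so the corresponding denominator $|D_{\xi+Lk}|$ is much larger than $\delta_0/L^2$ and the product $\alpha_\xi\overline{\alpha_{\xi+Lk}}$ is suppressed; only a small ``resonant'' set of $\xi$ for which both $\xi$ and $\xi+Lk$ lie on one of the two circles contributes significantly. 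Combined with Cauchy--Schwarz in the probability space, this reduces matters to a quantitative bound of the form $\E(|M_k|^2)^{1/2}\lesssim L^{-\sigma}|k|^{-\sigma}$ for some $\sigma>0$.

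The main obstacle is this quantitative second-moment estimate on $M_k$. Expanding $|M_k|^2$ produces a sum over lattice quadruples, and one must separate the resonant quadruples (all four frequencies on or near the two Laplace circles) from the bulk. Resonant terms are governed by lattice-point-on-circle counts and should be controlled using the estimates of \cite{RU,U2} restricted to the density-one subsequence $\scrS'$ of generic eigenvalues, while the non-resonant bulk is suppressed by the large denominators. The smallness of the disorder parameter $\epsilon_0$ ensures that the characteristic function $\hat P_0(\xi/L)$ of the displacement law is close to $1$ in the relevant frequency range, so that $\beta_\xi$ is well approximated by its deterministic counterpart and, via independence of the $\omega_\eta$, the four-point correlator decouples into products of two-point correlators up to small errors. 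Once this is in place, the rapid decay of $\hat a(k)$ (since $a\in C^\infty(\T^2)$) and the polynomial lower bound $\E(\varphi_E^L(x_0))\gtrsim L^{-A}$ from Proposition \ref{lowpo} ensure that the error sum is negligible compared with the main term for all $L$ large enough, giving the claim.
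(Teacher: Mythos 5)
Your overall skeleton --- isolating the zero mode of $a$ as the main term, exploiting the spectral gap $\delta_0$ to concentrate the Fourier mass of $g_\lambda$ on the two circles $|\xi|^2\in\{n_k,n_{k+1}\}$, and importing the shifted-annulus lattice-point bounds of \cite{RU,U2} along the subsequence $\scrS'$ --- matches the paper. But there are genuine gaps. First, your error estimate never invokes the $f$-localization hypothesis, which is the standing assumption of the contradiction argument and is what the paper's proof of this proposition actually runs on. The numerators of your $\alpha_\xi$ are the exponential sums $\beta_\xi=\sum_\eta c_{\eta,\omega_L}e^{-2\pi\i\langle\xi/L,\eta+\omega_\eta\rangle}$ over all $\sim L^2$ scatterers, and with only the normalization $\sum_\eta|c_\eta|^2=1$ the best uniform bound is $|\beta_\xi|^2\leq L^2$, which swamps the $\lambda^{-\delta}=(EL^2)^{-\delta}$ savings from the lattice-point estimates (the $\delta$ available from \cite{RU} is small). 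The paper first proves (Lemma \ref{discrete}) that localization forces $\E(|d_{\xi,\omega_L}|^2)\lesssim f(|\xi-x_0|)$, so that the coefficient vector is effectively supported on a ball of radius $N_0=1000\,(bL^A)^{1/(\alpha/2-2)}$ --- a tiny power of $L$ once $\alpha$ is large, using Proposition \ref{lowpo} --- giving the uniform bound $\E(|D_{\omega_L}(\xi)|^2)\lesssim N_0\,\E(\varphi_E^L(x_0))$. Without this input your target estimate $\E(|M_k|^2)^{1/2}\lesssim L^{-\sigma}|k|^{-\sigma}$ cannot be closed. Second, your proposed decoupling of the four-point correlator ``via independence of the $\omega_\eta$'' is not available: the coefficients $c_{\eta,\omega_L}$ are determined by the entire random configuration (they span $\ker M_\lambda$), so the summands of $\beta_\xi$ are not independent, and smallness of $\epsilon_0$ does not make $\beta_\xi$ deterministic. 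The paper deliberately avoids any four-point analysis, using only Cauchy--Schwarz on the lattice sums together with first and second moments of the $d_\eta$.

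A further omission: your $\alpha_\xi$ carries the factor $\|G_E^L\|^{-1}$, and nothing in the proposal lower-bounds $\|G_\lambda\|_2$ relative to the $\ell^2$-normalization of the superposition vector. This is where the paper uses property (iii) of $\scrS'$ (equidistribution of the lattice points on the circle $|\xi|^2=n_{k-1}$) together with the weak-disorder condition $\epsilon_0\ll 1/\sqrt{b}$ to obtain $\|G_\lambda\|_2^2\gtrsim\lambda^{-\epsilon}F_{\omega_L}(n_{k-1})$ with $F_{\omega_L}(n_{k-1})$ bounded below; without such a bound the normalized error term cannot be compared to the main term $\hat a(0)\E(\varphi_E^L(x_0))$.
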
 

So by Proposition \ref{lowpo} there exists $x_0\in \T^2_L$ s. t. for $\psi_\lambda=\phi_E^L(x_0)^{1/2}g_\lambda$, $\lambda=EL^2$, $a\in C^\infty(\T^2)$,
$$\E(\left\langle a\psi_\lambda,\psi_\lambda\right\rangle)\gtrsim \hat{a}(0)L^{-A}.$$
Fix $y\in \T^2_L$ s. t. $|x_0-y|\asymp L$, and take $a=\chi_\epsilon(\cdot-x)$, $x=y/L$. We obtain for $EL^2\gg 1$
\begin{equation}
\begin{split}
\phi_E^L(x_0)\phi_E^L(y)
&=\phi_E^L(x_0)\int_{\T^2_L}\chi_{\epsilon L}(y'-y)|g_E^L(y')|^2dy'\\
&=L^{-2}\phi_E^L(x_0)\int_{\T^2_L}\chi_{\epsilon}(\frac{y'-y}{L})|g_E^L(y')|^2dy'\\
&=L^{-2}\phi_E^L(x_0)\int_{\T^2}\chi_{\epsilon}(x'-x)|g_{EL^2}(x')|^2dx'
\end{split}
\end{equation}
where we used $\int_{\T^2}\chi_\epsilon(x)dx=1$ and $g_E^L(y)=L^{-1}g_{EL^2}(y/L)$.

We obtain $$L^{-2-A}\lesssim \E(\varphi_E^L(x_0)\varphi_E^L(y))\leq f(L)\lesssim L^{-\alpha}$$ which leads to a contradiction for $L\gg 1$ for $\alpha>2+A$.

\section{Background}\label{back}

\subsection{Self-adjoint extension theory}\label{extensions}
Let $x_1,\cdots,x_N$ be distinct points on $\T^2$. Denote $\x=\{x_1,\cdots,x_N\}$. This section will be concerned with the rigorous mathematical realization of the formal operator 
\begin{equation}\label{formal}
-\Delta+\sum_{j=1}^N \alpha_j \delta(x-x_j), \quad \alpha_1,\cdots,\alpha_N\in\R
\end{equation}
where $\Delta$ denotes the Laplacian with Dirichlet boundary conditions.

Define $D_{\x}:=C^\infty_c(\T^2-\x)$ and consider the restricted Laplacian $H=-\Delta|_{D_\x}$. Denote the Green's function of the Laplacian on $\T^2$ by $$G_\lambda(x,y)=(\Delta+\lambda)^{-1}\delta(x-y).$$

The operator $H$ has deficiency indices $(N,N)$ and the deficiency spaces are spanned by the bases of deficiency elements $\{G_{\pm\i}(x,x_1),\cdots,G_{\pm\i}(x,x_N)\}$ respectively. There exists a family of self-adjoint extensions of $H$ which is parameterized by the group $U(N)$. We denote the self-adjoint extension of $H$ associated with a matrix $U\in U(N)$ by $-\Delta_{\x,U}$. 



\subsubsection{Spectrum and eigenfunctions}\label{spectrum}
As explained above there are two types of eigenfunctions of the operator $-\Delta_{\x,U}$. Generic and non-generic eigenfunctions.

Our results hold for both types of new eigenfunctions. Since non-generic eigenfunctions only occur with probability $0$ and do not feel the presence of all impurities, we will ignore them for the rest of the paper, and focus on the generic eigenfunctions.

To find the new eigenfunctions of the operator $-\Delta_{\x,U}$ we want to solve
\begin{equation}
(\Delta_{\x,U}+\lambda)g_\lambda=0.
\end{equation}

We may write $g_\lambda$ in the decomposition
\begin{equation}\label{decomp}
g_\lambda=f_\lambda+\left\langle v,\G_\i\right\rangle + \left\langle Uv,\G_{-\i}\right\rangle
\end{equation} 
where $\G_{\lambda}(x)=(G_\lambda(x,x_1),\cdots,G_\lambda(x,x_N)),v\in\C^N$ and $g_\lambda\in C^\infty_c(\T^2)$.

So we have
\begin{equation}
(\Delta+\lambda)f_\lambda+(-\i+\lambda)\left\langle v,\G_\i\right\rangle + (\i+\lambda)\left\langle Uv,\G_{-\i}\right\rangle=0.
\end{equation}
We apply the resolvent $(\Delta+\lambda)^{-1}$, for $\lambda\not\in\sigma(-\Delta)$, and obtain
\begin{equation}
f_\lambda+\frac{-\i+\lambda}{\Delta+\lambda}\left\langle v,\G_\i\right\rangle + \frac{\i+\lambda}{\Delta+\lambda}\left\langle Uv,\G_{-\i}\right\rangle= 0
\end{equation}
By the repeated resolvent identity $$\frac{\mp\i+\lambda}{(\Delta+\lambda)(\Delta\pm\i)}=-\frac{1}{\Delta+\lambda}+\frac{1}{\Delta\pm\i}$$ we can rewrite this equation as
\begin{equation}\label{FNeq}
f_\lambda-\left\langle v,\G_\lambda-\G_\i\right\rangle - \left\langle Uv,
\G_\lambda-\G_{-\i}\right\rangle = 0
\end{equation}
Furthermore, note that we can write more compactly $$\left\langle v,\G_\lambda-\G_\i\right\rangle + \left\langle v,U^{-1}(\G_\lambda-\G_{-\i})\right\rangle
=\left\langle v, \A_\lambda \right\rangle$$
where $\A_\lambda(x)=(\G_\lambda-\G_\i)(x) + U^{-1}(\G_\lambda-\G_{-\i})(x)$. 

Now, since $f_\lambda=\left\langle v,\A_\lambda \right\rangle\in C^\infty_c(\T^2)$, we obtain the equations (set $x=x_k$ for $k=1,\cdots,N$)
\begin{equation}
\left\langle v, \A_\lambda(x_k) \right\rangle = 0, 
\quad k=1,\cdots, N,
\end{equation}
which we can rewrite as the matrix equation
\begin{equation}\label{speceqn}
M_\lambda\;v = 0
\end{equation}
where $F_\x(\lambda)=M_\lambda=(\A_\lambda(x_1),\cdots,\A_\lambda(x_N))$. 

So in order to find nontrivial solutions we need to solve the spectral equation 
\begin{equation}\label{spectral}
\det M_\lambda=0.
\end{equation}
We note that $\det M_\lambda$ is a meromorphic function of $\lambda$ with poles at the Laplacian eigenvalues, which we recall are given by the set $\scrS=\{n \mid n=4\pi^2(x_1^2+x_2^2), \;x_1,x_2\in\Z\}=\{0=n_0<n_1<n_2<\cdots\}$. 

Given a solution $\lambda\in\sigma(-\Delta_{\x,U})$ the corresponding eigenfunction will be given by
\begin{equation}
G^N_{\lambda,\x}(x)=\left\langle(\Id+U)v,\G_\lambda(x)\right\rangle=\sum_{j=1}^N d_{\lambda,j}(\x)G_\lambda(x,x_j), \quad v\in\ker M_\lambda
\end{equation}
which can be seen by substituting identity \eqref{FNeq} in \eqref{decomp}. 

Note that, for a full measure subset of $\x\in \T^{2N}$, we have that $d_{\lambda,j}(\x)\neq0$, $j=1,\cdots,N$, and $\dim\ker M_\lambda=1$. When we put a continuous probability measure on the space $\T^{2N}$, we may say that these statements hold with probability $1$.

\subsection{Scaling to the standard torus}\label{scaling}

It can easily be seen that the formal definition of the operator $-\Delta_{U,\x}$ via the theory of self-adjoint extensions corresponds to the standard Laplacian $-\Delta$ acting on functions $f\in C^\infty(\T^2-\x)$ where $\Delta f+c_1\delta_{x_1}+\cdots+c_N\delta_{x_N}\in L^2(\T^2)$, where $c_j\in\C$, $j=1,\cdots,N$, and $f$ diverges logarithmically at each of the points $x_j$, where the constants in the asymptotics depend on the choice of the matrix $U$.

Let $f\in L^2(\T^2_L)$ and define by $g(y)=f(Ly)$ a function $g\in L^2(\T^2)$. Let $L\x=(Lx_1,\cdots,Lx_N)$.
It can easily be seen that the eigenvalue problem $$(\Delta_{U,L\x}+E)f=0$$ on the large torus $\T^2_L$ corresponds to the eigenvalue problem $$(L^{-2}\Delta_{U,\x}+E)g=0$$ on the unit torus $\T^2$. If, in the first problem we study eigenfunctions with eigenvalue $E$ and the limit of large tori $L\to\infty$, then in the second problem this corresponds to studying the large eigenvalue limit $\lambda=E L^2\to\infty$.

\section{Proof of Proposition \ref{equithm}}\label{deloc}

We will study the spatial distribution properties of the wave functions of the formal operator 
\begin{equation}\label{Poisson T_L}
-\Delta+\alpha\sum_{\omega_\xi\in\omega_L}\delta(x-\xi-\omega_\xi), \quad \alpha\in\R
\end{equation}
on the torus $\T^2_L$, in the limit $L\to\infty$, where $N=\#\omega_L$ and $\omega_L$ denotes the random displacement model on $\T^2_L$. Rigorously, we proceed as above and realize the formal Hamiltonian as the self-adjoint extension of the restricted Laplacian $-\Delta|_{C^\infty_c(\T^2_L-\omega_L)}$.

Consider a fixed interval $[a,b]\subset\R_+$. Assume an eigenfunction with eigenvalue $E\in[a,b]$, as chosen above, of the random Schr\"odinger operator \eqref{random R2} is exponentially localized. This localization should also be observed on a sufficiently large torus $\T^2_L$, provided $L$ is much larger than the localization length (which depends on the choice of interval).

The eigenfunctions of the operator \eqref{Poisson T_L} are given by random superpositions of Green's functions (see also appendix \ref{Green scaling} for the definition and relation between the Green's functions on the tori $\T^2$ and $\T^2_L$)
$$G_E^L(x)=\sum_{\omega_\xi\in\omega_L}c_{\xi,\omega_L}G_E^L(x,\xi+\omega_\xi).$$

We recall from the end of subsection \ref{spectrum} that, almost surely, $\dim\ker M_\lambda=1$, where $\lambda=E L^2$.
\begin{normal}
We make the convention that the coefficients $c_{\xi,\omega_L}$ are normalized to ensure that $$\sum_{\xi\in \T^2_L\cap\Z^2}|c_{\xi,\omega_L}|^2=1.$$
\end{normal}

\subsection{Localization implies bounds on correlations of coefficients}
Fix any $x_0\in \T^2_L$. Let $d_{\xi,\omega_L}=\varphi_E^L(x_0)^{1/2}c_{\xi,\omega_L}$. We introduce the two point correlation density $$\Psi_E^L(y)=\varphi_E^L(x_0)^{1/2}G_E^L(y).$$

We readily find that $f$-localization of the correlation density $\Psi_E^L$ implies for each $\xi\in\Z^2$ the bound $\E(|d_{\xi,\omega_L}|^2)\lesssim\footnote{For positive functions $f,g$ we denote by $f\lesssim g$ that there exists a constant $C>0$ such that $f\leq Cg$} f(|\xi-x_0|)$. 
So, at low disorder, any localization of the eigenfunctions of the operator $H_{\omega_L}$ really translates directly into a corresponding bound on the discrete correlation function $d_{\xi,\omega_L}:\Z^2\cap \T^2_L \mapsto \C$.

We have the following lemma.
\begin{lem}\label{discrete}
Let $I=[a,b]$ with\footnote{This ensures that $1/\sqrt{E}\ll 1$ and we can therefore pick $1/\sqrt{E}\ll R_0\ll 1$.} $a\gg 1$ and $f:\R_+\to\R_+$ be a strictly decreasing function, which may depend on the choice of the interval $I$. Assume that $-\Delta_{\varphi,\omega_L}$ is $f$-localized on $I$.

It follows that there exists a positive constant $C_{f}$ such that for our chosen $E\in I$
$$\forall \xi\in\Z^2\cap \T^2_L: \; \E(|d_{\xi,\omega_L}|^2)\leq C_{f} b^2 f(|\xi-x_0|).$$
\end{lem}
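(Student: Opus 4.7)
The lemma is an inheritance result: assuming $f$-localization of the smoothed density $\varphi_E^L$, we must transfer this spatial decay to the Green's-function superposition coefficients $c_{\xi,\omega_L}$. Since $|d_{\xi,\omega_L}|^2=\varphi_E^L(x_0)|c_{\xi,\omega_L}|^2$, the natural strategy is to apply $f$-localization at the pair $(x_0,y_\xi)$ with $y_\xi:=\xi+\omega_\xi$: this is admissible once $|\xi-x_0|\geq 4R+\epsilon_0$, and by monotonicity of $f$ together with $|\omega_\xi|\leq\epsilon_0<\tfrac{1}{4}$ it produces
\[
\varphi_E^L(x_0)\,\varphi_E^L(y_\xi)\leq \scrC_{\omega_L}\,f(|\xi-x_0|-\epsilon_0)\lesssim \scrC_{\omega_L}\,f(|\xi-x_0|).
\]
The remaining task is to dominate $|c_{\xi,\omega_L}|^2$ by a controlled multiple of $\varphi_E^L(y_\xi)$.

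For this I would exploit the logarithmic singularity of the 2D Green's function at each scatterer, splitting the eigenfunction near $y_\xi$ as
\[
G_E^L(y)=c_{\xi,\omega_L}G_E^L(y,y_\xi)+H_\xi(y),\qquad H_\xi(y):=\sum_{\xi'\neq\xi}c_{\xi',\omega_L}G_E^L(y,y_{\xi'}),
\]
where $G_E^L(\cdot,y_\xi)$ carries a logarithmic singularity at $y_\xi$ while $H_\xi$ is smooth on $B(y_\xi,\tfrac{1}{2})$ (the other scatterers lying at distance $\geq\tfrac{1}{2}$ thanks to $\epsilon_0<\tfrac{1}{4}$). Using $|a|^2\leq 2|a+b|^2+2|b|^2$ and integrating against $\chi_R(\cdot-y_\xi)$ yields
\[
|c_{\xi,\omega_L}|^2\,I_\xi(R,E)\leq 2\|G_E^L\|_2^2\,\varphi_E^L(y_\xi)+2\int\chi_R(y-y_\xi)|H_\xi(y)|^2\,dy,
\]
with $I_\xi(R,E):=\int\chi_R(y-y_\xi)|G_E^L(y,y_\xi)|^2\,dy$. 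Taking the smallest admissible smoothing scale $R=100/\sqrt{E}$ (permitted since $a\gg 1$), standard 2D Green's-function asymptotics provide the deterministic lower bound $I_\xi(R,E)\gtrsim 1/b$; combined with the deterministic bound $\varphi_E^L\leq\|\chi_R\|_\infty\lesssim b$ used later on, this accounts for the $b^2$ prefactor in the conclusion.

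Multiplying through by $\varphi_E^L(x_0)$, applying the $f$-localization bound above, and taking expectations via Cauchy--Schwarz exactly as in the corollary after Definition \ref{locdefn} (using finiteness of $\E(\scrC_{\omega_L}^2)$ and $\E(\|G_E^L\|_2^4)$) estimates the contribution of the principal term $\|G_E^L\|_2^2\varphi_E^L(y_\xi)$ by $\lesssim b\,f(|\xi-x_0|)$. The main technical obstacle is the control of the cross-term integral $\int\chi_R|H_\xi|^2$: a naive Cauchy--Schwarz bound $|H_\xi(y)|^2\leq\sum_{\xi'\neq\xi}|G_E^L(y,y_{\xi'})|^2$ exploiting $\sum_{\xi'}|c_{\xi',\omega_L}|^2=1$ is independent of $\xi$ and therefore provides no decay in $|\xi-x_0|$. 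One must instead rewrite $H_\xi=G_E^L-c_{\xi,\omega_L}G_E^L(\cdot,y_\xi)$ and, by a careful quantitative balance at the scale $R=100/\sqrt{E}$ between the singular integral $I_\xi(R,E)$ and the localised $L^2$-mass of $G_E^L$ in the $R$-ball around $y_\xi$, absorb the cross-term back into an expression involving only $\varphi_E^L(y_\xi)$ and $\|G_E^L\|_2^2$, at the cost of an additional factor of $b$; this is what ultimately delivers $\E(|d_{\xi,\omega_L}|^2)\leq C_f b^2 f(|\xi-x_0|)$.
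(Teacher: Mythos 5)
Your overall architecture is sound: reduce to bounding $|c_{\xi,\omega_L}|^2$ by a controlled multiple of the local mass near $y_\xi=\xi+\omega_\xi$, then invoke $f$-localization at the pair $(x_0,y_\xi)$ and take expectations via Cauchy--Schwarz. But the step that actually carries the content of the lemma --- extracting $|c_{\xi,\omega_L}|^2$ from $\varphi_E^L(y_\xi)$ --- is not proved, and the fix you sketch does not work. If you rewrite $H_\xi=G_E^L-c_{\xi,\omega_L}G_E^L(\cdot,y_\xi)$ and expand $\int\chi_R|H_\xi|^2\leq 2\int\chi_R|G_E^L|^2+2|c_{\xi,\omega_L}|^2 I_\xi(R,E)$, substituting back into your inequality $|c_{\xi,\omega_L}|^2 I_\xi\leq 2\int\chi_R|G_E^L|^2+2\int\chi_R|H_\xi|^2$ produces $|c_{\xi,\omega_L}|^2 I_\xi$ on the right with coefficient $4>1$: the term cannot be absorbed. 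The underlying obstruction is quantitative: at the smallest admissible scale $R\asymp 1/\sqrt{E}$ the logarithmic singularity contributes only an $O(1)$ (in fact $O((\log(\sqrt{E}R))^2)=O(1)$) amount to $I_\xi(R,E)$ relative to the $\chi_R$-average, which is the \emph{same} order as the possible $L^2(\chi_R)$-mass of the regular Helmholtz solution $H_\xi$ on that ball. There is no separation of scales forcing the singular term to dominate, so smallness of $\varphi_E^L(y_\xi)$ does not by itself force smallness of $|c_{\xi,\omega_L}|$ without a genuinely new input (e.g.\ an a priori bound on $H_\xi$ near $y_\xi$ that is uniform in the distance to the spectrum of $-\Delta$, which is delicate precisely because $E$ sits between consecutive Laplace eigenvalues).

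The paper avoids this entirely with a different, essentially algebraic device: test $\Psi_E^L$ against $\chi_E:=(-\Delta-E)\chi_{R_0}(\cdot-x)$ with a \emph{fixed} scale $1/\sqrt{a}\ll R_0<1/4$. Since $(-\Delta-E)G_E^L(\cdot,y_{\xi'})=\delta_{y_{\xi'}}$, one gets exactly
$\int\chi_E(y-x)\Psi_E^L(y)\,dy=\sum_{\xi'}d_{\xi',\omega_L}\chi_{R_0}(y_{\xi'}-x)$, and choosing $x=y_\xi$ kills every term except $d_{\xi,\omega_L}\chi_{R_0}(0)$ because the perturbed scatterers are at mutual distance $\geq 1/2>2R_0$. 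The coefficient is thus isolated exactly, with no cross-term to absorb. The left-hand side is then controlled by the localization hypothesis using the pointwise bound $|\chi_E|\leq C_0\sum_i\chi_{R_0}(\cdot-y_i)+b\,\chi_{R_0}$ (a finite cover for the $|\Delta\chi_{R_0}|$ part plus the factor $E\leq b$), which is where the $b^2$ in the statement comes from. You should adopt this test-function identity; without it, your Step 2 is a genuine gap, not a technicality.
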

\begin{proof}
Fix $R_0$ s.t. $1/\sqrt{a}\ll R_0<1/4$ (which is possible because $a\gg 1$).

Denote $\chi_0=\chi_{R_0}$ and $\chi_{E}=(-\Delta-E)\chi_0$ and note that, for some constant $C_0>0$, $$|\Delta\chi_0(y)|\leq C_0\sum_{y_i\in\scrC}\chi_0(y-y_i)$$ where $\scrC\subset\T_L^2$ is a finite cover such that $$B(0,2R)\subset\bigcup_{y_i\in\scrC}B(y_i,R).$$ 


We have $$|\chi_E(y)|\leq C_0\sum_{y_i\in\scrC}\chi(y-y_i)+b\chi(y)$$
which implies, where $C$ is a positive constant,
\begin{equation}
\begin{split}
&\E\left(\Big|\int_{\T^2_L}\chi_{E}(y-x)\Psi_E^L(y)dy \Big|^2\right) \\
\leq &\E\left\{\left(\int_{\T^2_L}|\chi_{E}(y-x)|^{1/2}|\chi_{E}(y-x)|^{1/2}|\Psi_E^L(y)|dy\right)^2\right\}\\
\leq &(C_0|\scrC|+b) \E\left(\int_{\T^2_L}|\chi_{E}(y-x)||\Psi_E^L(y)|^2 dy\right)\\
\leq&(C_0|\scrC|+b) \left(C_0\sum_{y_i\in\scrC}f(|x+y_i-x_0|)+bf(|x-x_0|)\right)\\
\leq & Cb^2 f(|x-x_0|).
\end{split}
\end{equation}

Now 
\begin{equation*}
\begin{split}
\int_{\T^2_L}\chi_{E}(y-x)\Psi_E^L(y)dy
=&\sum_{\omega_\xi\in\omega_L}d_{\xi,\omega_L} \int_{\T^2_L}\chi_{E}(y-x)G_E^L(y,\xi+\omega_\xi)dy\\
=&\sum_{\omega_\xi\in\omega_L}d_{\xi,\omega_L} \int_{\T^2_L}\chi_{E}(y)G_E^L(y,\xi+\omega_\xi-x)dy\\
=&\sum_{\omega_\xi\in\omega_L}d_{\xi,\omega_L} \chi_0(\xi+\omega_\xi-x)
\end{split}
\end{equation*}

If we now fix $x=\xi+\omega_\xi$, we get (recall that $4R_0<1$), where $C_{f}$ is a positive constant,
\begin{equation}
\begin{split}
\E(\underbrace{|\chi_0(0)|^2}_{=R_0^{-4}}|d_{\xi,\omega_L}|^2)
=&\E\left(\Big|\sum_{\omega_\eta\in\omega_L}d_{\xi,\omega_L}\chi(\eta+\omega_\eta-\xi-\omega_\xi)\Big|^2\right)\\
\leq& Cb^2 f(|\xi+\omega_\xi-x_0|)\\
\leq& \tilde{C}_{f}b^2 f(|\xi-x_0|)
\end{split}
\end{equation}
and $C_f=R_0^4\tilde{C}_{f}$.

\end{proof}

\subsection{Scaling}
Now we can identify the eigenfunction $G_E^L$ with the eigenfunction $G_\lambda$ with eigenvalue $\lambda=EL^2$ of the operator $-\Delta_{\Omega_L,U}$, where $\Omega_L=\{\xi/L+\omega_\xi/L\}_{\xi\in \T^2_L\cap\Z^2}$, on the standard torus $\T^2$: $G_\lambda(x)=G_E^L(Lx)$. 

We have the following spectral expansion for the Green's function on $\T^2$ with Dirichlet boundary conditions (cf. Appendix B, eq. \eqref{Greenexp}), which is valid for $\lambda\notin\sigma(-\Delta)$ in the distributional sense,
\begin{equation}
\begin{split}
G_\lambda(x,y)=&(-\Delta-\lambda)^{-1}\delta(x-y)=\sum_{\xi\in\Z^2}c_\lambda(\xi)e_{\xi}(x-y)\\ c_\lambda(\xi)=&\frac{1}{|\xi|^2-\lambda}
\end{split}
\end{equation}
where $e_\xi(x)=e^{2\pi\i\left\langle \xi,x \right\rangle}$.

Now recall $$\Psi_E^L(y)=\varphi_E^L(x_0)^{1/2}G_E^L(y)$$ and consider the rescaled correlation function $$\Psi_\lambda(x)=\varphi_E^L(x_0)^{1/2}G_\lambda(x).$$

From the observations above we conclude that $\Psi_\lambda$ is of the form (for the scaling of the Green's function cf. Appendix \ref{Green scaling})
\begin{equation}
\begin{split}
\Psi_\lambda(x)=&\sum_{\eta\in\Z^2\cap \T^2_L}d_{\eta,\omega_L} G_{\lambda}(x,\eta/L+\omega_\eta/L)\\
=&\sum_{\xi\in\Z^2}c_\lambda(\xi)D_{\omega_L}(\xi)e_{\xi}(x), \\
D_{\omega_L}(\xi)=&\sum_{\eta\in \Z^2\cap \T^2_L}d_{\eta,\omega_L}e_\xi(-\eta/L-\omega_\eta/L).
\end{split}
\end{equation}

\subsection{Approximation on thin annuli}
Let $\delta>0$ as in \cite{U2}.
We define $$\Psi_\lambda^{\delta}(x)=\sum_{\xi\in\Z^2\cap A(n_k,n_k^\delta)}c_\lambda(\xi)D_{\omega_L}(\xi)e_{\xi}(x)$$ and $$\Psi_\lambda^R(x)=\sum_{\xi\in\Z^2\cap A(n_k,n_k^{\delta})^c}c_\lambda(\xi)D_{\omega_L}(\xi)e_{\xi}(x).$$

We have, for $\zeta\in\Z^2$,
\begin{equation*}
\begin{split}
|\left\langle e_{\zeta} \Psi_\lambda,\Psi_\lambda\right\rangle|
\leq &|\left\langle e_{\zeta} \Psi_\lambda^\delta,\Psi_\lambda^\delta\right\rangle|
+\|\Psi_\lambda^R\|_2^2+2\|\Psi_\lambda^R\|_2\|\Psi_\lambda\|_2\\
\end{split}
\end{equation*}
where
\begin{equation*}
\begin{split}
|\left\langle e_{\zeta} \Psi_\lambda^\delta,\Psi_\lambda^\delta\right\rangle|
\leq&\sum_{\xi\in\Z^2\cap A(n_k,n_k^\delta)}|c_\lambda(\xi)D_{\omega_L}(\xi)c_\lambda(\xi+\zeta)D_{\omega_L}(\xi+\zeta)|\\
\end{split}
\end{equation*}
First of all we have 
\begin{equation*}
\begin{split}
&\sum_{\xi\in\Z^2\cap A(n_k,n_k^{\delta})}|c_\lambda(\xi)D_{\omega_L}(\xi)c_\lambda(\xi+\zeta)D_{\omega_L}(\xi+\zeta)|\\
\leq &\left(\sum_{\xi\in\Z^2\cap A(n_k,n_k^{\delta})}c_\lambda(\xi+\zeta)^2|D_{\omega_L}(\xi+\zeta)|^2\right)^{1/2}
\left(\sum_{\xi\in\Z^2\cap A(n_k,n_k^{\delta})}c_\lambda(\xi)^2|D_{\omega_L}(\xi)|^2\right)^{1/2}
\\
<\; &\|\Psi_\lambda\|_2
\Bigg(
\sum_{\substack{\xi\in\Z^2\cap A(n_k,n_k^{\delta})\\ |\xi|^2<n_k}}
c_{n_k}(\xi+\zeta)^2|D_{\omega_L}(\xi+\zeta)|^2 
+\sum_{\substack{\xi\in\Z^2\cap A(n_k,n_k^{\delta})\\ |\xi|^2>n_{k+1}}}
c_{n_{k+1}}(\xi+\zeta)^2|D_{\omega_L}(\xi+\zeta)|^2
\Bigg)^{1/2}
\end{split}
\end{equation*}

And, secondly,
\begin{equation*}
\begin{split}
\|\Psi_\lambda^R\|_2^2
&\leq\sum_{\xi\in\Z^2\cap A(n_k,n_k^{\delta})^c}c_\lambda(\xi)^2|D_{\omega_L}(\xi)|^2\\
\leq &
\sum_{\substack{\xi\in\Z^2\cap A(n_k,n_k^{\delta})^c\\ |\xi|^2<n_k}}
c_{n_k}(\xi)^2|D_{\omega_L}(\xi)|^2
+\sum_{\substack{\xi\in\Z^2\cap A(n_k,n_k^{\delta})^c\\ |\xi|^2>n_{k+1}}}
c_{n_{k+1}}(\xi)^2|D_{\omega_L}(\xi)|^2
\end{split}
\end{equation*}


Because of the decay of the $\E(|d_{\eta,\omega_L}|^2)$ we have, for $N_0$ large, that there exists a constant $C_0>0$ such that we have the following bound
\begin{equation*}\label{ineq1}
\begin{split}
\E(|D_{\omega_L}(\xi+\zeta)|^2)\leq \E\left\{(\sum_{\eta\in \T^2_L\cap\Z^2}|d_{\eta,\omega_L}|)^2\right\}
&=\sum_{\eta_1,\eta_2\in \T^2_L\cap\Z^2}\E(|d_{\eta_1,\omega_L}d_{\eta_2,\omega_L}|)\\
&\leq \sum_{\eta_1,\eta_2\in \T^2_L\cap\Z^2}\E(|d_{\eta_1,\omega_L}|^2)^{1/2}
\E(|d_{\eta_2,\omega_L}|^2)^{1/2}\\
&\leq C_0\sum_{\substack{\eta_1,\eta_2\in \T^2_L\cap\Z^2 \\ |\eta_1-x_0|,|\eta_2-x_0|\leq N_0}}
\E(|d_{\eta_1,\omega_L}|^2)^{1/2}\E(|d_{\eta_2,\omega_L}|^2)^{1/2}\\
&= C_0\left(\sum_{\substack{\eta\in \T^2_L\cap\Z^2 \\ |\eta-x_0|\leq N_0}}\E(|d_{\eta,\omega_L}|^2)^{1/2}\right)^2\\
&\leq C_0 N_0\sum_{\substack{\eta\in \T^2_L\cap\Z^2 \\ |\eta-x_0|\leq N_0}}\E(|d_{\eta,\omega_L}|^2)
\end{split}
\end{equation*}
where we picked $N_0$ large enough to ensure that (recall $f(r)=O(r^{-\alpha})$ and take $\alpha>4$)
\begin{equation}\label{ineq2}
\begin{split}
\sum_{\substack{\eta\in \T^2_L\cap\Z^2 \\ |\eta-x_0|> N_0}}\E(|d_{\eta,\omega_L}|^2)^{1/2}\lesssim b\int_{|x-x_0|> N_0}f(|x-x_0|)^{1/2}dx
&=b\int_{N_0}^{\infty}f(r)^{1/2}r dr\\
&\lesssim_{d,f} bN_0^{-\alpha/2+2}\\
&\leq \frac{1}{2}L^{-A}\\
&\leq \frac{1}{2}\E(\varphi_E^L(x_0))=\frac{1}{2}\sum_{\eta\in \T^2_L\cap\Z^2}\E(|d_{\eta,\omega_L}|^2)\\
&\leq \frac{1}{2}\sum_{\eta\in \T^2_L\cap\Z^2}\E(|d_{\eta,\omega_L}|^2)^{1/2},
\end{split}
\end{equation}
say $N_0=1000\, (bL^A)^{\frac{1}{\frac{\alpha}{2}-2}}$, and we used $\E(|d_{\eta,\omega_L}|^2)\leq \E(\varphi_E^L(x_0))\leq 1$.

Let $a\in C^\infty(\T^2)$ with Fourier expansion 
$$a(x)=\sum_{\zeta\in\Z^2}\hat{a}(\zeta)e_{\zeta}(x).$$

It suffices to prove the result for any trigonometric polynomial of degree $J$, a standard approximation argument then yields the result for $C^\infty$ test functions (see for instance \cite{U2}).

Let $\tilde{a}=a-\hat{a}(0)$. We have
$$|\left\langle \tilde{a} \Psi_\lambda,\Psi_\lambda\right\rangle|
\leq |\left\langle \tilde{a} \Psi_\lambda^\delta,\Psi_\lambda^\delta\right\rangle|
+\|\tilde{a}\|_\infty \|\Psi_\lambda^R\|_2^2
+\|\tilde{a}\|_\infty \|\Psi_\lambda^R\|_2 \|\Psi_\lambda\|_2.$$

Now 
$$|\left\langle \tilde{a} \Psi_\lambda^\delta,\Psi_\lambda^\delta\right\rangle|
\leq \sum_{\zeta\in\Z^2\setminus\{0\}, |\zeta|\leq J}|\hat{a}(\zeta)|
|\left\langle e_\zeta \Psi_\lambda^\delta,\Psi_\lambda^\delta\right\rangle|
\leq \sum_{\zeta\in\Z^2\setminus\{0\}, |\zeta|\leq J}|\hat{a}(\zeta)|\scrA_{\delta,\zeta}(\omega_L)
:=\scrA(\omega_L)
$$
where, for $\zeta\neq 0$ and $\lambda=EL^2\in[n_k+\delta_0,n_{k+1}-\delta_0]\subset(n_k,n_{k+1})$,
\begin{equation}
\begin{split}
\E(\scrA_{\delta,\zeta}(\omega_L))
\leq&\E(\|\Psi_\lambda\|_2^2)^{1/2}\left(\sum_{\substack{\xi\in A(n_k,n_k^\delta)\\|\xi|^2<n_k}}|c_{n_k}(\xi+\zeta)|^2
+\sum_{\substack{\xi\in A(n_k,n_k^\delta)\\|\xi|^2>n_{k+1}}}|c_{n_{k+1}}(\xi+\zeta)|^2\right)^{1/2}\\
&\hspace{2.2in}\times \sqrt{C_0N_0}\left(\sum_{\substack{\eta_j\in \T^2_L\cap\Z^2\\ |\eta-x_0|\leq N_0}}\E(|d_{\eta_j,\omega_L}|^2)\right)^{1/2}\\
&\lesssim_\epsilon N_0\lambda^{-\delta_1+\epsilon}\E(\varphi_E^L(x_0))
\end{split}
\end{equation}
for some $\delta_1>0$, where we used a bound on sums over lattice points in shifted annuli which, for $d=2$, are given in \cite{RU} (see also appendix \ref{construct}).

We also used
the normalization of the $c_\eta$, i.e. $$\sum_{\eta\in \Z^2\cap \T^2_L} |d_\eta|^2=\varphi_E^L(x_0)\sum_{\eta\in \Z^2\cap \T^2_L} |c_\eta|^2=\varphi_E^L(x_0),$$
and the following lemma.
\begin{lem}
For $\lambda\in [n_k+\delta_0,n_{k-1}-\delta_0]$ we have the bound $$\E(\|\Psi_\lambda\|_2^2)\lesssim_\epsilon N_0\delta_0^{-2}\lambda^\epsilon\E(\varphi_E^L(x_0)).$$
\end{lem}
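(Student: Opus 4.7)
The plan is to apply Parseval's identity to the Fourier expansion
$$\Psi_\lambda(x)=\sum_{\xi\in\Z^2}c_\lambda(\xi)D_{\omega_L}(\xi)e_\xi(x),$$
which gives $\|\Psi_\lambda\|_2^2=\sum_{\xi\in\Z^2}c_\lambda(\xi)^2|D_{\omega_L}(\xi)|^2$. Taking expectation and inserting the bound $\E(|D_{\omega_L}(\xi)|^2)\leq C_0 N_0 \E(\varphi_E^L(x_0))$ established in the displayed computation preceding the lemma (where I use that $\sum_{|\eta-x_0|\leq N_0}\E(|d_{\eta,\omega_L}|^2)\leq \E(\varphi_E^L(x_0))$ by the normalization $\sum_\eta|d_{\eta,\omega_L}|^2=\varphi_E^L(x_0)$), reduces the claim to the deterministic lattice sum estimate
$$\sum_{\xi\in\Z^2}\frac{1}{(|\xi|^2-\lambda)^2}\lesssim_\epsilon \frac{\lambda^\epsilon}{\delta_0^2}.$$

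To prove this lattice bound I would split the sum according to the distance of $|\xi|^2$ from $\lambda$. Under the hypothesis $\lambda\in[n_k+\delta_0,n_{k+1}-\delta_0]$, no integer $|\xi|^2\in\scrS$ lies within $\delta_0$ of $\lambda$, so the terms with $|\xi|^2\in\{n_k,n_{k+1}\}$ contribute at most $(r_2(n_k)+r_2(n_{k+1}))\delta_0^{-2}$, where $r_2$ denotes the number of representations as a sum of two squares. Using the classical divisor-type bound $r_2(n)\ll_\epsilon n^\epsilon$, this contribution is $\lesssim_\epsilon \lambda^\epsilon\delta_0^{-2}$.

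For the remaining terms with $||\xi|^2-\lambda|\geq 1$, I would bound
$$\sum_{\substack{\xi\in\Z^2\\ ||\xi|^2-\lambda|\geq 1}}\frac{1}{(|\xi|^2-\lambda)^2}=\sum_{\substack{n\in\scrS\\ |n-\lambda|\geq 1}}\frac{r_2(n)}{(n-\lambda)^2}\lesssim_\epsilon \lambda^\epsilon\sum_{m\geq 1}\frac{1}{m^2}+\sum_{n>2\lambda}\frac{n^\epsilon}{n^2}\lesssim_\epsilon \lambda^\epsilon,$$
splitting into $n<2\lambda$ (where $n^\epsilon\leq(2\lambda)^\epsilon$ and $(n-\lambda)^2\geq m^2$ for $m=\lceil|n-\lambda|\rceil$, with at most $\lambda^\epsilon$ values of $n$ yielding each $m$) and $n>2\lambda$ (where $(n-\lambda)^2\asymp n^2$). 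Combining the two regimes gives the claimed bound on the lattice sum, which when multiplied by $C_0 N_0\E(\varphi_E^L(x_0))$ yields the lemma.

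The main obstacle is the first regime: one must confirm that no pole of $c_\lambda$ lies closer than $\delta_0$ to $\lambda$, which is guaranteed by the choice of $\lambda$ in the open interval $(n_k,n_{k+1})$ away from the endpoints by $\delta_0$, and that the multiplicities at $n_k, n_{k+1}$ are controlled by the divisor bound rather than by Landau's average $\sqrt{\log\lambda}$ (one needs the pointwise, not average, bound to handle individual $k$). Everything else is a routine comparison of a Riemann sum with an integral.
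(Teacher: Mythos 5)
Your proof is correct and follows essentially the same route as the paper's: Parseval, the uniform bound $\E(|D_{\omega_L}(\xi)|^2)\le C_0N_0\E(\varphi_E^L(x_0))$ coming from the normalization $\sum_\eta|d_{\eta,\omega_L}|^2=\varphi_E^L(x_0)$, a $\delta_0^{-2}$ bound on the two circles $|\xi|^2=n_k,n_{k+1}$, and $r_2(n)\lesssim_\epsilon n^\epsilon$ to control the convergent tail sum $\sum r_2(n)/(n-\lambda)^2$ (the paper measures distances to $n_k$ resp.\ $n_{k+1}$ rather than to $\lambda$, an immaterial difference). One cosmetic slip: with $m=\lceil|n-\lambda|\rceil$ you have $|n-\lambda|\ge m-1$ rather than $(n-\lambda)^2\ge m^2$, so the dyadic-type grouping should be indexed by $m-1$ (handling $m=1$ via $|n-\lambda|\ge1$); this changes nothing.
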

\begin{proof}
We have the inequality
\begin{equation}
\begin{split}
\E(\|\Psi_\lambda\|_2^2)&=\sum_{\xi\in\Z^2}\E(c_\lambda(\xi)^2|D_{\omega_L}(\xi)|^2)\\
&\leq \sum_{|\xi|^2<n_k}c_{n_k}(\xi)^2\E(|D_{\omega_L}(\xi)|^2)+\sum_{|\xi|^2>n_{k+1}}c_{n_{k+1}}(\xi)^2\E(|D_{\omega_L}(\xi)|^2)\\
&\quad+\delta_0^{-2}\sum_{|\xi|^2=n_k,n_{k+1}}\E(|D_{\omega_L}(\xi)|^2)\\
&\lesssim_{\epsilon} N_0\E(\varphi_E^L(x_0))\left(\delta_0^{-2}n_k^{\epsilon}+\sum_{n<n_k}\frac{n^\epsilon}{(n-n_k)^2}
+\sum_{n>n_{k+1}}\frac{n^\epsilon}{(n-n_{k+1})^2}\right)\\
&\lesssim_\epsilon N_0\delta_0^{-2}n_k^\epsilon \E(\varphi_E^L(x_0))
\end{split}
\end{equation}
where we used $r_2(n)\lesssim_\epsilon n^\epsilon$.
\end{proof}

We thus have
\begin{equation}
\E(|\left\langle a \Psi_\lambda^\delta,\Psi_\lambda^\delta\right\rangle|)
\leq\sum_{\zeta\in\Z^2\setminus\{0\}, |\zeta|\leq J}|\hat{a}(\zeta)|\E(\scrA_{\delta,\zeta}(\omega_L))
\lesssim_\epsilon \|\hat{a}\|_{l^1}N_0\lambda^{-\delta_1+\epsilon}\E(\varphi_E^L(x_0))
\end{equation}

Furthermore, we have
\begin{equation}
\begin{split}
\E(\|\Psi_\lambda^R\|_2^2)\leq\left(\sum_{\substack{\xi\in A(n_k,n_k^\delta)^c\\|\xi|^2<n_k}}c_{n_k}(\xi)^2
+\sum_{\substack{\xi\in A(n_k,n_k^\delta)^c\\|\xi|^2>n_{k+1}}}c_{n_{k+1}}(\xi)^2\right)
\times C_0N_0\sum_{\substack{\eta_j\in \T^2_L\cap\Z^2\\ j\leq N_0}}\E(|d_{\eta_j,\omega_L}|^2)\\
\lesssim_\epsilon N_0\lambda^{-\delta_2+\epsilon}\E(\varphi_E^L(x_0)).
\end{split}
\end{equation}

Therefore,
\begin{equation}
\begin{split}
&\E|\left\langle \tilde{a} \Psi_\lambda,\Psi_\lambda\right\rangle|\\
&\leq \E|\left\langle \tilde{a} \Psi^\delta_\lambda,\Psi^\delta_\lambda\right\rangle|+\|\hat{a}\|_{l^1}\E(\|\Psi^R_\lambda\|_2^2)
+\|\hat{a}\|_{l^1}\E(\|\Psi^R_\lambda\|_2^2)^{1/2}\E(\|\Psi^\delta_\lambda\|_2^2)^{1/2}\\
&\lesssim\|\hat{a}\|_{l^1}N_0\lambda^{-\delta}\E(\varphi_E^L(x_0))
\end{split}
\end{equation}
for some $\delta>0$ (again this follows from the bounds on lattice point sums of the above type which, for $d=2$, are given in \cite{RU} (see also appendix \ref{construct})). 

So we have (recall $\Psi_\lambda=\varphi_E^L(x_0)^{1/2}G_\lambda$ and $\psi_\lambda=\varphi_E^L(x_0)^{1/2}g_\lambda$)
\begin{equation}
\E(\left\langle a\psi_\lambda,\psi_\lambda\right\rangle)
=\hat{a}(0)\E(\varphi_E^L(x_0))+\E(\left\langle\tilde{a}\psi_\lambda,\psi_\lambda\right\rangle)
\end{equation}
and 
\begin{equation}
\begin{split}
\E(\left\langle\tilde{a}\psi_\lambda,\psi_\lambda\right\rangle)
\leq &\E(\left\langle\tilde{a}\Psi_\lambda,\Psi_\lambda\right\rangle)^{1/2}
\E(\left\langle\tilde{a}\Psi_\lambda,\Psi_\lambda\right\rangle \|G_\lambda\|_2^{-4})^{1/2} \\
\lesssim &\|\hat{a}\|_{l_1}N_0^{1/2}\lambda^{-\delta/2}\E(\varphi_E^L(x_0))^{1/2}\E(\varphi_E^L(x_0)\|G_\lambda\|_2^{-2})^{1/2}
\end{split}
\end{equation}
where we used for the second term the inequality
$$\E(\left\langle\tilde{a}\Psi_\lambda,\Psi_\lambda\right\rangle \|G_\lambda\|_2^{-4})
\leq \|\hat{a}\|_{l_1}\E(\varphi_E^L(x_0)\|G_\lambda\|_2^{-2})$$
in view of $|\left\langle\tilde{a}\Psi_\lambda,\Psi_\lambda\right\rangle|\leq \|\hat{a}\|_{l_1}\|\Psi_\lambda\|_2^2$.

Now we observe
$$\|G_\lambda\|_2^2=\sum_{\xi\in\Z^2}c_\lambda(\xi)^2|C_{\omega_L}(\xi)|^2\geq_\epsilon \lambda^{-\epsilon}\sum_{|\xi|^2=n_{k-1}}|C_{\omega_L}(\xi)|^2:=\lambda^{-\epsilon}F_{\omega_L}(n_{k-1})$$
where  
$$C_{\omega_L}(\xi)=\sum_{\eta\in \Z^2\cap \T^2_L}c_{\eta,\omega_L}e_\xi(-\eta/L-\omega_\eta/L)$$ and note $n_{k+1}-n_k\lesssim_\epsilon\ n_k^\epsilon$.

It follows for $\lambda=EL^2\gg 1$, $\hat{a}(0)\neq0$ and $\alpha>2A/\delta+4$ (recall $N_0=1000\, (bL^A)^{\frac{1}{\frac{\alpha}{2}-2}}$)
\begin{equation}
\begin{split}
\E(\left\langle a\psi_\lambda,\psi_\lambda\right\rangle)
=&\hat{a}(0)\E(\varphi_E^L(x_0))\\
&+\scrO_\epsilon(\|\hat{a}\|_{l_1}\lambda^{-\delta/2+\epsilon})\E(\varphi_E^L(x_0))^{1/2}\E(\varphi_E^L(x_0)\|G_\lambda\|_2^{-2})^{1/2}\\
=&\hat{a}(0)\E(\varphi_E^L(x_0))^{1/2}\\
&\times\E\left(\varphi_E^L(x_0)\left(1+\scrO(\frac{\|\hat{a}\|_{l_1}}{\hat{a}(0)^2}N_0^{1/2}\lambda^{-\delta/2+\epsilon})F_{\omega_L}(n_{k-1})^{-1}\right)\right)^{1/2}\\
\gtrsim& \hat{a}(0)\E(\varphi_E^L(x_0))
\end{split}
\end{equation} 
where we used that for $\diam\supp P_{\epsilon_0}=\epsilon_0\ll \frac{1}{\sqrt{b}}$ (weak disorder condition) we have
$$F_{\omega_L}(n_{k-1})\asymp r_2(n_{k-1})\int_{|v|^2=\sqrt{E}}|\hat{c}_{\omega_L}(v)|^2 d\theta 
\gtrsim (\log\lambda)^{\log 2/2 -\epsilon} \int_{|v|^2=\sqrt{E}}|\hat{c}_{\omega}(v)|^2 d\theta$$
as $L\to\infty$, where $d\theta$ denotes normalized Lebesgue measure on the circle of radius $\sqrt{E}$. 

To see this, we use the property of $n_k\in \scrS'$, that the lattice points $\xi/|\xi|$, $|\xi|^2=n_{k-1}$ equidistribute on $S^1$ as $k\to\infty$. Since for $|\xi|^2=n_{k-1}$ we have $|\xi/L|^2\sim E$, as $L\to\infty$, it follows that the lattice points $\xi/L$, $|\xi|^2=n_k$, equidistribute on the circle of radius $\sqrt{E}$. 

Furthermore, note that the assumption $\epsilon_0\ll \frac{1}{\sqrt{b}}$ implies
$|C_{\omega_L}(\xi)|^2\asymp |\hat{c}_{\omega_L}(\xi/L)|^2$, as $L\to\infty$, and $\hat{c}_{\omega_L}$ denotes the discrete Fourier transform 
$$\hat{c}_{\omega_L}(v)=\sum_{\eta\in\Z^2\cap \T^2_L}c_{\eta,\omega_L}e_\eta(-v)$$
of the function $c_{\omega_L}:\Z^2\cap \T^2_L\mapsto \C$, which converges to $$\hat{c}_\omega(v)=\sum_{\eta\in\Z^2}c_{\eta,\omega}e_\eta(-v), \quad \text{as $L\to\infty$}.$$

\section{Proof of Proposition \ref{lowpo}}\label{lower poly}

The proof is exactly analogous to the proof of Proposition \ref{equithm}, where $\varphi_E^L(x_0)$ is replaced with $1$.

Fix any $x_1\in \T^2_L$, and assume for a contradiction that for $\beta>2d$ and for any $\tfrac{1}{4}\leq R\leq \epsilon L$, $y\in \T^2_L$ we have
$$\E(\int_{\T^2_L} \chi_R(y'-y)|g_E^L(y')|^2dy')\leq (1+|x_1-y|)^{-\beta}.$$

We then have the following lemma whose proof is exactly analogous to that of Lemma \ref{discrete}.
\begin{lem}
There exists a positive constant $C$ s. t.
$$\forall \xi\in\Z^d\cap \T^2_L: \; \E(|c_{\xi,\omega_L}|^2)\leq C b^2 (1+|x_1-\xi|)^{-\beta}.$$
\end{lem}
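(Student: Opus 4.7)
My plan is to follow the proof of Lemma~\ref{discrete} essentially verbatim, with $g_E^L$ in place of $\Psi_E^L$ and the polynomial decay hypothesis on $\int\chi_R(y'-y)|g_E^L|^2\,dy'$ playing the role of the $f$-localization assumption on $\Psi_E^L$. Since $a \gg 1$ I pick $R_0$ with $1/\sqrt{a} \ll R_0 < 1/4$, set $\chi_0 = \chi_{R_0}$, and define $\chi_E = (-\Delta - E)\chi_0$, compactly supported in $B(0, R_0)$. A fixed cover $\scrC$ of $B(0, 2R_0)$ by balls of radius $R_0$ yields the same pointwise estimate as in Lemma~\ref{discrete},
$$|\chi_E(y)| \leq C_0\sum_{y_i\in\scrC}\chi_0(y-y_i) + b\,\chi_0(y).$$

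The core computation then mimics that of Lemma~\ref{discrete}. On the one hand, Cauchy--Schwarz and the hypothesis (applied at each shift $x+y_i$ and at $x$) give
$$\E\Bigl(\Bigl|\int_{\T^2_L}\chi_E(y-x)g_E^L(y)\, dy\Bigr|^2\Bigr) \lesssim b^2\,(1+|x_1-x|)^{-\beta}.$$
On the other hand, since $g_E^L = \|G_E^L\|_2^{-1}\sum_\eta c_{\eta,\omega_L}\,G_E^L(\cdot,\eta+\omega_\eta)$ and $(-\Delta - E)G_E^L(\cdot, z) = \delta_z$, one obtains
$$\int_{\T^2_L}\chi_E(y-x)g_E^L(y)\, dy = \|G_E^L\|_2^{-1}\sum_{\eta}c_{\eta,\omega_L}\,\chi_0(\eta+\omega_\eta - x).$$
Specializing $x = \xi + \omega_\xi$ and using that $\supp \chi_0 \subset B(0, 1/4)$ while the singularities $\eta+\omega_\eta$ are pairwise separated by at least $1 - 2\epsilon_0 > 1/2$, the sum collapses to the single term $\eta = \xi$, leaving $\|G_E^L\|_2^{-1}\,\chi_0(0)\,c_{\xi,\omega_L}$. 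Squaring, taking expectation, and combining with the Cauchy--Schwarz bound produces
$$\chi_0(0)^2\,\E\bigl(\|G_E^L\|_2^{-2}\,|c_{\xi,\omega_L}|^2\bigr) \lesssim b^2(1+|x_1-\xi|)^{-\beta}.$$

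The only ingredient beyond direct transcription of Lemma~\ref{discrete} is passing from $\E\bigl(\|G_E^L\|_2^{-2}|c_{\xi,\omega_L}|^2\bigr)$ to $\E(|c_{\xi,\omega_L}|^2)$. Under the normalization $\sum_\xi|c_{\xi,\omega_L}|^2 = 1$, the weak-disorder hypothesis $\epsilon_0 < 1/4$ (which keeps the singularities $1/2$-separated) and the choice $E \in [n_k/L^2 + \delta_0, n_{k+1}/L^2 - \delta_0]$ (bounding $E$ away from $\sigma(-\Delta)$) together allow one to estimate $\|G_E^L\|_2$ from above by a deterministic constant through the Gram matrix of the shifted Green's functions. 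This almost-sure upper bound---which I expect to be the one non-routine ingredient of the argument---then promotes the displayed inequality to the stated form $\E(|c_{\xi,\omega_L}|^2) \leq C\,b^2(1+|x_1-\xi|)^{-\beta}$. Everything else is a direct repeat of the manipulations already carried out in the proof of Lemma~\ref{discrete}.
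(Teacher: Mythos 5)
Your transcription of the proof of Lemma~\ref{discrete} is exactly what the paper intends here (its entire proof of this lemma is the phrase ``exactly analogous to that of Lemma~\ref{discrete}''), and you are right to single out the normalization mismatch as the one step that is not a literal repeat: the contradiction hypothesis of Proposition~\ref{lowpo} controls $|g_E^L|^2=\|G_E^L\|_{L^2(\T^2_L)}^{-2}|G_E^L|^2$, whereas the stated conclusion concerns the coefficients $c_{\xi,\omega_L}$ of $G_E^L$ normalized by $\sum_\eta|c_{\eta,\omega_L}|^2=1$. The collapsing argument therefore yields a bound on $\E(\|G_E^L\|_{L^2(\T^2_L)}^{-2}|c_{\xi,\omega_L}|^2)$, not on $\E(|c_{\xi,\omega_L}|^2)$, and the paper is silent on how to bridge the two.

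The repair you propose, however, does not work: there is no deterministic, $L$-independent upper bound on $\|G_E^L\|_{L^2(\T^2_L)}$ to be had from the Gram matrix. Already a single shifted Green's function satisfies $\|G_E^L(\cdot,z)\|_{L^2(\T^2_L)}^2=L^2\sum_{\xi\in\Z^2}(4\pi^2|\xi|^2-\lambda)^{-2}$ with $\lambda=EL^2$, and the mode with $4\pi^2|\xi|^2$ nearest to $\lambda$ (at distance between $\delta_0$ and $O_\epsilon(\lambda^{\epsilon})$) contributes a term of size at least $L^2\lambda^{-2\epsilon}$. So the Gram matrix $\mathcal{G}$ of the shifted Green's functions has diagonal entries of order $L^2$, hence $\|\mathcal{G}\|_{\mathrm{op}}\gtrsim L^2$, and the inequality $\|G_E^L\|_2^2=c^*\mathcal{G}c\leq\|\mathcal{G}\|_{\mathrm{op}}\|c\|^2$ cannot produce an $O(1)$ bound. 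Neither the weak-disorder condition nor the spectral gap $\delta_0$ helps: the $L^2$ factor comes from the volume of the torus, not from near-resonances or clustering of scatterers. What your route actually yields is $\E(|c_{\xi,\omega_L}|^2)\leq C(L)\,b^2(1+|x_1-\xi|)^{-\beta}$ with $C(L)$ growing polynomially in $L$. Such a degraded constant could in fact still be absorbed downstream in Proposition~\ref{lowpo} (it only enlarges $N_1$ to a small power of $L$, which is harmless for $\beta$ large, and the final conclusion is anyway only a polynomial lower bound $L^{-A}$), but it is not the lemma as stated --- whose constant must be $L$-independent for the subsequent choice $N_1=1000\,b^{1/(\beta/2-2)}$ to make sense --- and the specific mechanism you invoke, an $O(1)$ Gram-matrix bound, is false. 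The honest fix is either to restate the lemma for the coefficients of $g_E^L$ (i.e.\ for $c_{\xi,\omega_L}/\|G_E^L\|_2$) and propagate that normalization through the rest of Section~\ref{lower poly}, or to accept and track the polynomial loss in $L$.
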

We then apply this lemma to obtain the bound
$$\E(|C_{\omega_L}(\xi)|^2)\lesssim N_1$$
where we used the analogous estimate as in eq. \eqref{ineq1} as well as the normalization $\sum_\eta |c_{\eta,\omega_L}|^2=1$. We have to choose $N_1$ large enough such that
\begin{equation}\label{ineq2}
\begin{split}
\sum_{\substack{\eta\in \T^2_L\cap\Z^2 \\ |\eta-x_0|> N_0}}\E(|c_{\eta,\omega_L}|^2)^{1/2}\lesssim b\int_{|x-x_1|> N_1}(1+|x-x_1|)^{-\beta/2}dx
&=b\int_{N_1}^{\infty}(1+r)^{-\beta/2}rdr\\
&\lesssim_{d,f} bN_1^{-\beta/2+2}\\
&\leq \frac{1}{2}\\
&\leq\frac{1}{2}\sum_{\eta\in \T^2_L\cap\Z^2}\E(|c_{\eta,\omega_L}|^2)^{1/2},
\end{split}
\end{equation}
say $N_1=1000\, b^{\frac{1}{\frac{\beta}{2}-2}}$, and we used $\E(|c_{\eta,\omega_L}|^2)\leq 1$.

We readily derive for any $a\in C^\infty(\T^2)$ and $\lambda\gg1$, following the proof of Prop. \ref{equithm}, where we replace the factor $\varphi_E^L(x_0)$ with $1$, $$\left\langle a g_\lambda,g_\lambda\right\rangle\gtrsim \hat{a}(0).$$

Now, by our assumption, we have that for $y\in \T^2_L$ s.t. $|x_1-y|\asymp L$
\begin{equation}
\begin{split}
L^{-\beta}\asymp(1+|x_1-y|)^{-\beta}&\geq\int_{\T^2_L}\chi_{\epsilon L}(y'-y)|g_E^L(y')|^2dy'\\
&=L^{-2}\int_{\T^2_L}\chi_{\epsilon}(\frac{y'-y}{L})|g_E^L(y')|^2dy'\\
&=L^{-2}\int_{\T^2}\chi_{\epsilon}(x'-x)|g_{EL^2}(x')|^2dx'\\
&\gtrsim L^{-2}
\end{split}
\end{equation}
which leads to a contradiction.

Since our assumption is false, it follows that there exists $\tfrac{1}{4}\leq R'\leq \epsilon L$, $y=x_0\in \T^2_L$ s. t.
$$\E(\int_{\T^2_L} \chi_{R'}(y'-x_0)|g_E^L(y')|^2dy')> (1+|x_1-x_0|)^{-\beta}\geq (1+L)^{-\beta}.$$
And we have $\chi_{\epsilon L}\geq (\frac{R'}{\epsilon L})^2\chi_{R'}\gtrsim L^{-2}\chi_{R'}$, which implies
$$\E(\varphi_E^L(x_0))=\E(\int_{\T^2_L} \chi_{\epsilon L}(y'-x_0)|g_E^L(y')|^2dy')\gtrsim L^{-\beta-2}.$$

\begin{appendix}

\section{Constructing the subsequence $\scrS'$}\label{construct}

Denote by $\scrS=\{n \mid n=4\pi^2(x_1^2+x_2^2), x_1, x_2\in\Z\}=\{0=n_0<n_1<n_2<\cdots\}$ the set of Laplacian eigenvalues on the unit square $B=[-1/2,1/2]^2$ with Dirichlet boundary conditions, where we ignore multiplicities. 

There exists a subsequence $\scrS^*\subset \scrS$ of density $1$ and $\delta_2>0$ such that for all $n_k\in \scrS^*$:
\begin{itemize}
\item[(i)] 
$n_{k+1}-n_{k-1} \leq
C_\epsilon n_k^\epsilon$
\item[(ii)] $\forall\lambda\in(n_k,n_{k+1})\; \forall\zeta\neq0,|\zeta|\leq J\; \forall\xi\in A(n_k,n_k^{\delta_2})\cap\Z^2: |c_\lambda(\xi+\zeta)|\lesssim\lambda^{-\delta_2}$
\item[(iii)] The lattice points on the circle $|\xi|^2=n_{k-1}$ become equidistributed as $k\to\infty$. 

This means that for any $g\in C^0(S^1)$ $$\frac{1}{r_2(n_{k-1})}\sum_{|\xi|^2=n_{k-1}} g\left(\frac{\xi}{|\xi|}\right)\rightarrow \int_{S^1}g(\theta)d\theta$$ as $k\to\infty$.
\end{itemize}

\begin{proof}
(i): To see this, recall that the elements of $\scrS$, integers representable as sums of $2$ squares, have mean spacing of order $\sqrt{\log n_k}$. 
Therefore, the subsequence of $n_k$ s. t. $n_{k+1}-n_k\leq C_\epsilon n_k^\epsilon$ and those $n_k$ s. t. $n_k-n_{k-1}\leq C_\epsilon n_k^\epsilon$ are of density $1$ respectively. 
Consequently, their intersection is a subsequence of density $1$. 

(ii): This proof is exactly identical to the construction in sections 6 and 7 of \cite{RU}. Note the additional factor $4\pi^2$ which is due to the fact that we consider the standard torus $\R^2/\Z^2$ rather than the scaled torus $\R^2/2\pi\Z^2$ considered in \cite{RU}. 

(iii): It follows from classical equidistribution theorems that on a generic circle of radius $\sqrt{n_{k-1}}$ the lattice points $\xi\in\Z^d$ satisfying $|\xi|^2=n_k$ become equidistributed. Hence we may construct a density one subsequence of $n_k$ such that this holds for the neighbouring circles $|\xi|^2=n_{k-1}$.
\end{proof}

\section{Scaling of Green's functions}\label{Green scaling}

Here we simply point out the simple relationship between the Green's functions on the tori $\T^2$ and $\T^2_L$.

The Green's function on $\T^2$ is given by $G_\lambda=(-\Delta_{\T^2}-\lambda)^{-1}\delta(x-y)$ and, we recall, has the following Fourier expansion, which is convergent in the $L^2$-sense:
\begin{equation}\label{Greenexp}
G_\lambda(x,y)=\sum_{\xi\in\N^2}\frac{e_\xi(x-y)}{|\xi|^2-\lambda}.
\end{equation}

The Green's function on $\T^2_L$ is given by $G_E^L=(-\Delta_{\T^2_L}-E)^{-1}\delta(x-y)$ and via its Fourier expansion it can easily be related to the Green's function on $\T^2$ by scaling
\begin{equation}
\begin{split}
G_E^L(x,y)=&L^{-2}\sum_{\xi'\in L^{-1}\Z^2}\frac{e_{\xi'}(x-y)}{|\xi'|^2-E}\\
=&\sum_{\xi\in\Z^2}\frac{e_{\xi}((x-y)/L)}{|\xi|^2-E L^2}\\
=&G_\lambda\left(\frac{x}{L},\frac{y}{L}\right),\quad \lambda=E L^2.
\end{split}
\end{equation}

\end{appendix}

\end{document}